\newtheorem{assumption}{Assumption}
\newproof{proof}{Proof}
\def\Re{\mathbb{R}}
\def\R{\mathbb{R}}
\def\argmin{\mathop{\text{\rm arg\,min}}}
\def\Sec#1{Sec.~\ref{#1}}
\def\notes#1{\marginpar{\tiny #1}\typeout{Notes!
Notes!
Notes!
}}
\renewcommand{\notes}[1]{\typeout{notes!}}
\def\FRAC#1#2#3{\genfrac{}{}{}{#1}{#2}{#3}}
\def\half{{\mathchoice{\FRAC{1}{1}{2}}%
{\FRAC{2}{1}{2}}%
{\FRAC{3}{1}{2}}%
{\FRAC{4}{1}{2}}}}
\def\Re{\field{R}}
\def\Sec#1{Sec.~\ref{#1}}
\def\transpose{{\hbox{\rm\tiny T}}}
\def\Sec#1{Sec~\ref{#1}}
\def\E{{\sf E}}
\def\Expect{{\sf E}}
\def\Prob{{\sf P}}
\def\Expect{{\sf E}}
\def\wham#1{\smallbreak\noindent\textbf{\textit{#1}}}
\def\R{\mathbb{R}}
\def\Sec#1{Sec.~\ref{#1}}
\newcommand{\expect}{ {\sf E} }
\newtheorem{lemma}{Lemma}
\newtheorem{proposition}{Proposition}
\def\beq{\begin{eqnarray}} 
\def\bc{\begin{center}} 
\def\be{\begin{enumerate}}
\def\bi{\begin{itemize}} 
\def\bs{\begin{small}}
\def\bS{\begin{slide}}
\def\ec{\end{center}} 
\def\ee{\end{enumerate}}
\def\ei{\end{itemize}}
\def\es{\end{small}}
\def\eS{\end{slide}}
\def\eeq{\end{eqnarray}}
\newcommand{\trace}{\text{Tr}}
\newcommand{\ud}{\,\mathrm{d}}
\def\Re{\mathbb{R}}
\def\E{{\sf E}}
\def\argmin{\mathop{\text{\rm arg\,min}}}
\def\Sec#1{Sec.~\ref{#1}}
\def\Prop#1{Prop.~\ref{#1}}
\def\Expect{{\sf E}}
\renewcommand{\Re}{\mathbb{R}}
\def\Prob{{\sf P}}
\def\FRAC#1#2#3{\genfrac{}{}{}{#1}{#2}{#3}}
\newcommand{\Ricc}{\text{Ricc}}
\newcommand{\Ybar}{\bar{X}}
\newcommand{\Sbar}{\bar{S}}
\newcommand{\SN}{S^{(N)}}
\newcommand{\normal}{\mathcal{N}}
\newcommand{\hP}{h}
\newcommand{\vP}{\mathcal{V}}
\newcommand{\id}{\mathcal{I}}
\newcommand{\RU}{R}
\newcommand{\backward}[1]{\overset{\shortleftarrow}{#1}}
\newcommand{\stepsize}{\Delta t}
\newcommand{\iid}{\stackrel{\text{i.i.d}}{\sim}}
\newcommand{\Ybarbar}{{{Y}}}
\newcommand{\etabar}{{\eta}}
\newcommand{\tp}{^{\top}}
\def\Qrob{{\sf Q}}
\newcommand\gobblepars{%
    \@ifnextchar\par%
        {\expandafter\gobblepars\@gobble}%
{}}
\def\wham#1{\smallbreak\pagebreak[3]%
	\noindent\textbf{#1}\ \ \gobblepars}
\def\Prop#1{Prop.~\ref{#1}}
\def\Ybar{{X}}
\def\clA{\mathcal{U}}
\def\transpose{{\intercal}}
\def\fee{\upphi}
\renewcommand{\em}{\it}
\journal{Systems and Control Letters}
\begin{document}

\begin{frontmatter}



\title{Controlled Interacting Particle Algorithms for  Simulation-based Reinforcement Learning}


\author[a1]{Anant A. Joshi}

\address[a1]{Coordinated Science Laboratory and the Department of Mechanical Science and Engineering at the University of Illinois at Urbana-Champaign
            }


\author[a3]{Amirhossein Taghvaei}

\address[a3]{William E. Boeing Department of Aeronautics and Astronautics at University of Washington Seattle
            }
    
\author[a1]{Prashant G. Mehta}    
            
\author[a4]{Sean P. Meyn}
\address[a4]{Department of Electrical and Computer
  Engineering at University of Florida Gainesville
            }

\vspace*{0.5in}

\begin{abstract}

This paper is concerned with optimal control problems for control systems in continuous time, and interacting particle system methods designed to construct approximate  control solutions.   
Particular attention is given to the linear quadratic (LQ) control problem.  
There is a growing interest in re-visiting this classical problem, in part due to the successes of reinforcement learning (RL).  The main question of this body of research (and also of our paper) is to  approximate the optimal control law {\em without} explicitly solving the Riccati equation.     
A novel simulation-based algorithm, namely a dual ensemble Kalman
filter (EnKF), is introduced.  The algorithm is used to obtain
formulae for optimal control, expressed entirely in terms of the EnKF
particles. An extension to the nonlinear case is also presented.   The
theoretical results and algorithms are illustrated with numerical
experiments.  


\end{abstract}
 


\noindent 
\large {\it Dedicated to the memory of Ari Arapostathis}

\vspace*{0.5in}

%
%
%
%

\end{frontmatter}




\section{Introduction}
\label{sec:intro}


The field of reinforcement learning (RL) is concerned with optimal
control, to design a policy for a dynamical system that minimizes some
performance criterion.  All of the standard choices are treated in the
literature:  discounted cost, finite time-horizon, and average cost.
What makes the RL paradigm so different from optimal control as
formalized by Bellman and Pontryagin in the 1950s 
is that in RL the system identification step is usually avoided.
Instead, the optimal policy is approximated based on input-output
measurements.  
 
There are two standard approaches to obtain an algorithm for this
purpose:  (i)  critic methods, in which a value function is
approximated within a parameterized family, and the policy is obtained
as a functional of the approximation, and (ii) actor methods in which
a parameterized family of policies is given, and the algorithm is
designed to obtain the best policy within this family.

In popular media, RL is often described as an ``agent'' that learns an
approximately optimal policy based on interactions with the
environment.  Important examples of this ideal include advertising,
where there is no scarcity of real-time data.  In the vast majority of
applications we are not so fortunate, which is why successful
implementation usually requires simulation of the physical system for
the purposes of training.   For example, DeepMind's success story with
Go and Chess required weeks of simulation for training on a massive
collection of
super-computers~\cite{mu-zero_2019MasteringAG}.

This paper focuses on model-based RL in which 
the model is available only in the form of a simulator.  
The proposed approach is novel,  drawing
on mean-field techniques similar to those appearing in 
state estimation (data assimilation) in high dimension.  
It is likely that the concepts will lead to new approaches for online
RL---see directions for future research in the conclusions.    
 
We consider the finite-horizon optimal control problem,
\begin{subequations}\label{eq:nonlinear_opt_control_problem}
\begin{align}
\min_{u} \quad J(u) &= 
                                    \int_0^T 
                                    \left(\half |c(x_t)|^2 + \half u_t^\top \RU u_t \right)    \ud t  + g(x_T)
         \\
\text{subject to:} \quad \dot{x}_t &= a(x_t) + b(x_t) u_t , \; x_0=x\label{eq:nonlinear:model-dynamics}
\end{align} 
\end{subequations}
where  $x_t\in\Re^d$ is the state at time $t$, and $u=\{u_t\in\Re^m:
0\leq t\leq T\}$ is the control input.   The functions
 $a(\cdot)$, $b(\cdot)$, $c(\cdot)$, $g(\cdot)$ are
continuously differentiable ($C^1$), and the control penalty matrix positive definite, $\RU\succ 0$.  

In the 
linear quadratic (LQ) setting the model is linear ($a(x) = A
x$ and $b(x) = B$) and the cost function is quadratic ($c(x) = C x$
and $g(x) = x^\transpose P_T x$).  
The infinite-time horizon
($T=\infty$) case is referred to as the linear quadratic regulator
(LQR) problem.  Although it is a historical problem, LQR has been the
subject of recent research interest in the control community.
The goals of this research are much like ours: design simulations for
the purposes of learning the optimal control policy.        

The proposed solution involves construction of $N$ stochastic processes $\{Y_t^i\in\Re^d:   0\leq t\leq T, 1\leq i\leq N\}$ where the $i$-th particle evolves according to a stochastic differential equation (SDE) of the form,
\begin{equation}\label{eq:ybar_intro}
\ud Y^i_t =  \underbrace{a(Y_t^i) \ud t + b  (Y_t^i) \ud
  v_t^i}_{i-\text{th copy of
    model}~\eqref{eq:nonlinear:model-dynamics}} \; +\;
\clA_t^i \ud t,\;\; 0\leq t\leq T
\end{equation}
where the input $v^i=\{v_t^i\in\Re^m:0\leq t\leq T\}$ and the \textit{data assimilation
  process} $\clA^i=\{\clA_t^i\in\Re^d:0\leq t\leq T\}$ is obtained as part of the RL design.
The goal is to  design these processes so that  the  empirical
distribution of the $N$ particles at time $t$ approximates a smooth density $p_t$ (for the $N=\infty$ mean-field limit), encoding the
optimal policy as follows:
\begin{equation}
\fee_t^*(x) =   \RU^{-1}b^\top (x) \nabla \log {p}_t(x)
\label{e:policy}
\end{equation}
where $\nabla$ denotes the gradient operator.
In the
infinite-horizon case, a stationary policy is obtained by letting
$T\to\infty$. 
  

We make the following assumption:
 
 \begin{assumption}
 \label{ass:Ass1}
\begin{enumerate}
\item Functions $f(x,\alpha)=a(x) + b(x) \alpha$ and $c(x)$ are
  available in the form of an oracle (which allows function
  evaluation at any state action pair $(x,\alpha)\in\Re^d\times \Re^m$).
\item Matrices $R$ and $P_T$ are available.  Both of these matrices
  are strictly positive-definite.
\item Simulator is available to simulate (2).  In particular, this
  requires an ability to add additional inputs outside the control
  channel. 
\end{enumerate}
\end{assumption}

Part 1 of the assumption is standard for any RL algorithm.  Part 2 is
not too restrictive for the following reasons: In physical systems,
one typically is able to assess relative costs for different control
inputs (actuators).  For the LQR problem, under certain technical
conditions, the optimal policy is stationary and does not depend upon
the choice of $P_T$.  A possibility is to take $R$ and $P_T$ to be
identity matrices of appropriate dimensions.  The main restriction
comes from part 3 of the assumption.

A motivation comes from data assimilation applications such as weather prediction and geosciences where
Assumption~\ref{ass:Ass1} is standard.  The ensemble Kalman Filter
(EnKF) is a particle system method which serves as a workhorse in
these applications  \cite{evensen2006,reich2015probabilistic}. The
computational complexity  of the EnKF is $\mathcal{O}(N d)$ where $d$
is dimension, and $N$ is the number of particles,   with $N\ll d$
typical in these applications.

Part of the tremendous success of the EnKF in these domains is that it works directly with the
simulator. Multiple copies are run in a Monte-Carlo manner where the
data assimilation process is used to assimilate the most recent
measurement.

\textit{The goal of the research summarized here is to create approximation techniques with similar success for applications in control.
}   



\subsection{Contributions of this paper}

In order to elucidate these new ideas as clearly as possible, the main
focus of this paper is on the linear quadratic (LQ) problem. 
This also allows us to highlight and contrast our work with recent
developments.   
The
algorithm~\eqref{eq:ybar_intro} for the LQ problem is presented first
in \Sec{sec:LQ} before describing its nonlinear extension in \Sec{sec:nonlinear}.  The
details of the original contributions of our work are as follows:

\smallskip

\noindent \textbf{1.} For the LQ problem, the proposed algorithm is an ensemble Kalman
  filter (EnKF) referred to here as the {\em dual EnKF}.  The
  mean-field limit ($N=\infty$) of the dual EnKF is shown to be exact
  (\Prop{prop:Y-exactness}).
For the finite-$N$ algorithm, an error
bound on the approximation is obtained  under 
additional assumptions on the model (see~\eqref{eq:error_formula}).  An extensive discussion is
included in \Sec{sec:LQ-compare} to situate
the algorithm in the RL landscape.  In particular, it is shown that
(i) the process $v^i$ implements the exploration step of RL whereby the cheap
control directions are explored more; and (ii) the process $\clA^i$
implements the value iteration step of RL. 

\smallskip

\noindent \textbf{2.} For the nonlinear
  problem~\eqref{eq:nonlinear_opt_control_problem}, a dual algorithm
  is presented to approximate the Hamilton-Jacobi-Bellman (HJB)
  equation.  The algorithm requires a solution of a (linear) Poisson
  equation that is far more easily approximated.  It is shown that the
  dual EnKF algorithm for the LQ problem is a special case in which
  the Poisson equation admits an analytical solution.  

\smallskip

\noindent \textbf{3.} A numerical comparison of the dual EnKF algorithm against the
  state-of-the-art is described for benchmark examples.  It is shown
  that the proposed algorithm can be up to two orders
  of magnitude more computationally efficient (Fig.~\ref{fig:comp-to-lit}).
  Scalings with respect to both the number of particles $N$ and the
  problem dimension $d$ are numerically illustrated and compared with
  analytical bounds (Fig.~\ref{fig:mse}).  

\subsection{Literature review}

There are three areas of prior work that are related to the
subject of this paper: (i) RL algorithms for the LQR problem; (ii)
EnKF and related control-type algorithms for data assimilation; and
(iii) duality theory between optimal control and estimation.  

\smallskip

\noindent \textbf{(i) RL for LQR:} The LQ
problem   
has a rich and storied history in modern control
theory going back to the original work on the
subject~\cite{kalman-1960-o}.  Its solution requires solving a Riccati
equation -- the differential Riccati equation (DRE) in finite-horizon
settings or the algebraic Riccati equation (ARE) in the
infinite-horizon setting.   
There is a large body
of literature devoted to the analytical study of the Riccati
equations~\cite{lancaster1995algebraic} 
and
specialized numerical techniques have been developed to efficiently
compute the solution~\cite{laub1991invariant}.

There are two issues which makes the LQ and related problems a topic of
recent research interest: (i) In high-dimensions, the
matrix-valued nature of the DRE or ARE means that any algorithm is
$\mathcal{O}(d^2)$ in the dimension $d$ of the state-space, and (ii) the
model parameters may not be explicitly available to write down the DRE
(or the ARE) let alone solve it.  The latter is a concern, e.g., when
the model exists only in the form of a black-box numerical simulator.  

These two issues have motivated the recent research on the
infinite-horizon linear quadratic regulator (LQR) problem~\cite{fazel_global_2018,tu_gap_2019,dean_sample_2020,
malik_derivative-free_2020,mihailo-2021-tac}.  
Of particular interest are policy gradient type algorithms that seek to bypass solving an ARE by directly searching over
the space of stabilizing gain matrices. 
 The algorithms are of
iterative type where each iteration requires a policy evaluation step
(using $N$ simulations much like~\eqref{eq:ybar_intro}).  This step is used to estimate a gradient
which is then used to obtain a improved policy based on a gradient
descent procedure. 
Global convergence rate
estimates are established for both
discrete-time~\cite{fazel_global_2018,mohammadi_linear_2021}
and continuous-time~\cite{mihailo-2021-tac} settings of the LQR problem.  Extensions to
the $H_\infty$ regularized LQR~\cite{zhang_stability_2020} and Markov jump linear
systems~\cite{jansch2020convergence} have also been carried out.  In
the recent thesis \cite[Chapter 4]{kqz-thesis}, finite horizon
extensions are considered under additional assumptions.

Additional comparison with this prior work appears in
\Sec{sec:LQ-compare} and numerical comparison is in \Sec{sec:numerics}.

\smallskip

\noindent \textbf{(ii) EnKF for data assimilation:} Although novel for RL, the proposed algorithms are inspired by the 
data assimilation (nonlinear filtering)
literature~\cite{reich2015probabilistic}.  
During the past decade, a key breakthrough in the data assimilation
theory and its applications is the design of controlled interactions between
particles (such as $\clA_t^i$ in~\eqref{eq:ybar_intro}) to approximate the
solution of the nonlinear filtering problem; c.f.,~\cite{TaghvaeiCSM21} and 
references therein.  Such an approach is in contrast to
the traditional 
importance sampling type approaches which suffer from 
issues such as particle degeneracy~\cite{surace_SIAM_Review}.  Two
well known examples of the controlled interacting particle systems are
the ensemble Kalman filter (EnKF) and the feedback particle filter (FPF).  The EnKF is an exact algorithm for the linear
Gaussian filtering problem~\cite{reich11,bergemann2012ensemble} while the FPF is an exact
algorithm for the nonlinear non-Gaussian
case~\cite{taoyang_TAC12}. 
The two major algorithmic contributions of this work represent the optimal
control (dual) counterparts of the EnKF and the
FPF.  

Notably, EnKF is a
workhorse in data assimilation applications such as weather prediction
where models are simulation-based and
high-dimensional \cite{evensen2006,reich2015probabilistic}.  As noted
above, these two
issues have also motivated much of recent work on the LQ problem in
the control community.  

\smallskip

\noindent \textbf{(iii) Duality:} The formula~\eqref{e:policy} for the
optimal policy is a consequence of the so-called log transformation.  The
transformation relates the value function of an optimal control
problem to the posterior distribution of the dual optimal estimation
problem~\cite{fleming-1982}.
Duality is an old subject~\cite[Chapter 7.5]{astrom},\cite[Chapter 15]{kailath2000linear}.  In recent years, there has
been renewed interest in duality for algorithm design.  In much of the classical
literature on the subject, duality was used to obtain optimal control
algorithms for solving estimation problems~\cite{mortensen-1968}.  Although
it remains an important theme~\cite{kim2020optimal},
some of the more recent work has been in the opposite direction: to
solve optimal control problems by using sampling
techniques~\cite{levine-2018}.  Our work fits within this latter body
of work.  
 
A salient aspect of this paper is a detailed comparison with
literature appearing in each of the three main sections after
technical details have been presented. 



\subsection{Paper outline}

The outline of the remainder of this paper is as follows.  The LQ
optimal control  problem and its dual EnKF solution is described
in~\Sec{sec:LQ}.  The nonlinear extension and its connection to
duality appears
in~\Sec{sec:nonlinear}.  The algorithms are illustrated with numerical
examples in~\Sec{sec:numerics}.  
The proofs appear in the
Appendix. 
 
\smallskip

{\bf \noindent Notation:}
${\cal N}(m,\Sigma)$ is a Gaussian probability
distribution with mean $m$ and covariance $\Sigma$.  The notation  $\Sigma \succ 0$ is used when the matrix $\Sigma$ is positive definite. The $n \times n$ identity matrix is denoted $\id_n$. The trace of a matrix is denoted by $\trace(\cdot)$.
For a smooth function $f:\Re^d\to \Re$, $\nabla f(x) = [\frac{\partial f}{\partial x_1},\ldots,\frac{\partial f}{\partial x_d} ]^\top$ denotes the gradient of $f$, and $\nabla^2f(x) = [\frac{\partial^2f}{\partial x_m \partial x_n}(x)]_{n,m=1}^d$ denotes the Hessian matrix. For a smooth vector-field $v:\Re^d\to\Re^d$, $\nabla \cdot v(x) = \sum_{n=1}^d \frac{\partial v_n}{\partial x_n}(x)$ denotes the divergence.  And for a smooth tensor $D:\Re^d \to \Re^{d\times d}$, $\nabla \cdot D(x)$ is a vector field  whose $m$-th component is  $ \sum_{n=1}^d \frac{\partial D_{mn}}{\partial x_n}(x)$, and $\nabla^2 \cdot D(x) = \sum_{n,m=1}^d \frac{\partial^2 D_{mn}}{\partial x_n \partial x_m}(x)$.

 \section{LQ problem}
\label{sec:LQ}


\begin{subequations} 
\label{eq:LQ}

The finite-horizon linear quadratic (LQ) optimal control problem
is a special case of~\eqref{eq:nonlinear_opt_control_problem} as follows:
\begin{align}
\min_{u} \quad J(u) &= 
                                    \int_0^T \half |C x_t|^2 + \half
u_t^\top \RU u_t \ud t + \half x_T^\top
P_T x_T  \label{eq:LQ:model-objective} \\
\text{subject to:} \quad \dot{x}_t &= A x_t + B u_t, \quad x_0=x \label{eq:LQ:model-dynamics}
\end{align} 
It is assumed that $(A,B)$ is controllable, $(A,C)$ is observable,
and the matrices $P_T,\;\RU \succ 0$.  The $T=\infty$ limit is referred to as the linear quadratic regulator (LQR)
problem.   

\end{subequations}

It is well known
that the optimal control $u_t = \fee_t(x_t)$ where the optimal policy is linear
\[
\fee_t(x)=K_t x \quad \text{where}\quad K_t = - \RU^{-1}B^\top P_t
\]
is the optimal gain matrix and $P_t$ is a solution of the backward (in time) DRE
\begin{equation}\label{eq:Ricatti}
-\frac{\ud}{\ud t} P_t =A^\top P_t + P_t A + C^\top C  - P_tB  \RU^{-1}B^\top P_t,
\quad P_T \;\text{(given)}
\end{equation}
The ARE is obtained by setting the left-hand side to $0$.  As $T\to\infty$, 
for each fixed time $t$, $P_t \to P^{\infty}$, exponentially fast
\cite[Remark 2.1]{ocone1996}, where $P^{\infty} \succ 0$ is the unique
positive-definite solution of the ARE, and therefore the optimal gain converges,
$K_t \to K^{\infty}:=- \RU^{-1}B^\top P^{\infty}$.   Approximation of the LQR gain $K^{\infty}$ is
a goal in recent  RL
research~\cite{fazel_global_2018,mihailo-2021-tac}.          

\subsection{Main contribution: Dual EnKF algorithm}


Under the assumptions of this paper, $P_t\succ 0$ for $0\leq t\le T$ whenever $P_T\succ
0$~\cite[Sec.~24]{brockett2015finite}.
Set $S_t = P_t^{-1}$.  It is readily verified that $S_t$ also solves a backward DRE  
\begin{equation}\label{eq:backward-DRE}
\frac{\ud}{\ud t} S_t = AS_t + S_t A^\top - B  \RU^{-1}B^\top +
S_tC^\top C S_t,\quad S_T=P_T^{-1}
\end{equation}     

Our objective is to approximate $S_t$ using simulations. The proposed construction proceeds
in two steps: (i) definition of an exact mean-field process and (ii) its finite-$N$
approximation.


\medskip

\noindent \textbf{Step 1. Mean-field process:} 
Define $\Ybarbar=\{\Ybarbar_t\in\Re^d:
0\leq t\leq T\}$ as a solution of the
following backward (in time) stochastic differential
equation (SDE):
\begin{align}
\ud \Ybarbar_t & = A \Ybarbar_t \ud t   +  B\ud  \backward{\etabar}_t
                 +\half \Sbar_tC^\top (C\Ybarbar_t +
  C \bar{n}_t) \ud t,  \;0\leq t<T\nonumber \\ \Ybarbar_T & \stackrel{\text{d}}{=} \mathcal
  N(0,S_T) \label{eq:Ybar} 
\end{align}
where ${\etabar}=\{{\etabar}_t \in \Re^m:0\leq t\leq T\}$ is a Wiener
process (W.P.) 
with covariance matrix $\RU^{-1}$, and 
\begin{align}
\bar{n}_t:=\expect[\Ybarbar_t], \quad \Sbar_t
:=\Expect[(\Ybarbar_t-\bar{n}_t)(\Ybarbar_t-\bar{n}_t)^\top] \label{eq:Sbar}
\end{align}
The process $Y$ is an example of a
mean-field process because its evolution depends upon the statistics
($\bar{n}_t,\bar{S}_t$) of the process.  An SDE of this type is called a McKean-Vlasov SDE.  The meaning of the backward
arrow on $\ud  \backward{\etabar}$ in~\eqref{eq:Ybar} is that the SDE
is simulated backward in time starting from the terminal condition
specified at time $t=T$.  The reader is referred
to~\cite[Sec. 4.2]{nualart1988stochastic} for the definition of the
backward It\^{o}-integral. 

The mean-field process is useful because of the following proposition
whose proof is included in  \ref{app:prop-y-exact}.
\medskip

\begin{proposition}\label{prop:Y-exactness}
The solution to 
the SDE~\eqref{eq:Ybar}
is a Gaussian stochastic process, in which the mean and covariance of $\Ybarbar_t$ are given by
\[
\bar{n}_t = 0,\quad \bar{S}_t=S_t, \quad \;
0\leq t\leq T
\]
Consequently, $\Ybar_t  := \Sbar_t^{-1} (\Ybarbar_t-\bar{n}_t)$ is also
a Gaussian satisfying 
\[
{\sf E}(\Ybar_t) = 0,\quad {\sf E}(\Ybar_t \Ybar_t^\top) = P_t, \quad\;
0\leq t\leq T
\]
\end{proposition}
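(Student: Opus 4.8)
The plan is to leverage the linear--Gaussian structure of~\eqref{eq:Ybar}. First I would establish that $\Ybarbar_t$ is Gaussian. Because the terminal law is Gaussian, the noise $B\ud\backward{\etabar}_t$ enters additively, and the drift is affine in $\Ybarbar_t$ with coefficients depending only on the deterministic statistics $(\bar n_t,\Sbar_t)$, the McKean--Vlasov system closes at the level of the first two moments: positing a Gaussian ansatz, $(\bar n_t,\Sbar_t)$ are seen to satisfy a self-contained pair of ODEs, and substituting their (unique) solution back into~\eqref{eq:Ybar} yields an ordinary linear SDE with deterministic time-varying coefficients, whose solution is indeed Gaussian. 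This both removes any apparent circularity in the mean-field coupling and, via uniqueness for the closed ODEs, pins down the statistics.

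Second, for the mean I would take expectations in~\eqref{eq:Ybar}. The backward It\^o integral has zero mean, so $\bar n_t$ obeys the linear equation $\tfrac{\ud}{\ud t}\bar n_t = (A+\Sbar_t C^\top C)\bar n_t$ with terminal value $\bar n_T = 0$ inherited from $\Ybarbar_T\stackrel{\text{d}}{=}\mathcal N(0,S_T)$. Uniqueness then forces $\bar n_t\equiv 0$ on $[0,T]$.

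Third, with $\bar n_t=0$ the dynamics collapse to $\ud\Ybarbar_t=(A+\tfrac12\Sbar_t C^\top C)\Ybarbar_t\,\ud t+B\ud\backward{\etabar}_t$, and I would compute the evolution of $\Sbar_t=\E[\Ybarbar_t\Ybarbar_t^\top]$. This is the step I expect to be the main obstacle, since it hinges on handling the backward It\^o integral with the correct sign. The cleanest route is to reverse time via $\tau=T-t$, which converts the backward SDE into a standard forward linear SDE driven by a forward Wiener process of covariance $\RU^{-1}$; the familiar Lyapunov covariance identity then applies, and translating back to $t$ gives
\[
\tfrac{\ud}{\ud t}\Sbar_t = A\Sbar_t+\Sbar_t A^\top+\Sbar_t C^\top C\Sbar_t-B\RU^{-1}B^\top,\qquad \Sbar_T=S_T,
\]
which is exactly the backward DRE~\eqref{eq:backward-DRE}. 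The decisive point is the sign of the quadratic-variation term, $-B\RU^{-1}B^\top$ rather than $+B\RU^{-1}B^\top$; the time reversal pins this down automatically, whereas a direct application of the backward It\^o formula works too but requires the second-order correction to carry the opposite sign to the forward case. By uniqueness for the DRE, $\Sbar_t=S_t$.

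Finally, the consequence is immediate. Since $P_t\succ0$ gives $S_t=P_t^{-1}\succ0$ and hence $\Sbar_t$ invertible, $\Ybar_t=\Sbar_t^{-1}(\Ybarbar_t-\bar n_t)$ is an affine image of a mean-zero Gaussian, so $\E[\Ybar_t]=0$ and $\E[\Ybar_t\Ybar_t^\top]=\Sbar_t^{-1}\Sbar_t\Sbar_t^{-1}=\Sbar_t^{-1}=S_t^{-1}=P_t$.
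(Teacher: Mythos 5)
Your proposal is correct and follows essentially the same route as the paper's proof: take expectations of~\eqref{eq:Ybar} to get the linear mean ODE (forcing $\bar{n}_t\equiv 0$), derive the covariance ODE and identify it with the backward DRE~\eqref{eq:backward-DRE} so that uniqueness yields $\Sbar_t=S_t$, and conclude Gaussianity because the resulting SDE is an Ornstein--Uhlenbeck equation with a Gaussian terminal condition. The only difference is in technique at the step you flag as delicate: where you reverse time ($\tau=T-t$) to invoke the standard forward Lyapunov covariance identity, the paper applies the backward It\^{o} rule directly to $e_te_t^\top$, with the quadratic-variation correction $-BR^{-1}B^\top$ entering with a negative sign precisely because the driving noise is a backward Wiener process --- the two devices are equivalent and produce the same equation.
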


The significance of~\Prop{prop:Y-exactness} is that the optimal control
policy $\fee_t(\cdot)$ can now be obtained in terms of the statistics of the random variable
$\Ybar_t$.  Specifically, we have the following two cases: 
\begin{enumerate}
\item[(i)] If the matrix $B$ is explicitly known then the optimal gain matrix 
\[
K_t =  - \RU^{-1}B^\top {\sf E}(\Ybar_t
\Ybar_t^\top)
\]
\item[(ii)] If $B$ is unknown, define the Hamiltonian (the
continuous-time counterpart of the Q-function~\cite{mehta2009q}):
\[ H(x,\alpha,t) := \underbrace{\half
|Cx|^2 + \half \alpha^\top R \alpha}_{\text{cost function}} +
  x^\top {\sf E}(\Ybar_t \Ybar_t^\top) \underbrace{(Ax + B\alpha)}_{\text{model}~\eqref{eq:LQ:model-dynamics}} 
\]
 from which the
  optimal control law is obtained as
\[
\fee_t(x) = \argmin_{\alpha\in\Re^m} H(x,\alpha,t)
\] 
by recalling the minimum principle, which states that the optimal control is the unique minimizer of the
Hamiltonian. It is noted that the Hamiltonian $H(x,\alpha,t)$ is in the form of an oracle because $(Ax
    + B\alpha)$ is the right-hand side of the simulation
    model~\eqref{eq:LQ:model-dynamics}.
\end{enumerate}


\medskip

\noindent \textbf{Step 2. Finite-$N$ approximation:} 
The mean-field process is empirically approximated by simulating a
system of controlled interacting particles 
according to 
\begin{align}
\ud {Y}^i_t & = \underbrace{A {Y}^i_t \ud t + B\ud \backward{\eta}^i_t}_{\text{i-th copy of model}~\eqref{eq:LQ:model-dynamics}} +
                   \underbrace{S^{(N)}_tC^\top \left(\frac{C{Y}^i_t+Cn^{(N)}_t}{2} \right)}_{\text{data assimilation process}}\ud
  t, \label{eq:EnKF-Yi} \\
{Y}^i_T & \stackrel{\text{i.i.d}}{\sim} \mathcal
N(0,P_T^{-1}),\quad 1\leq i\leq N \nonumber
\end{align}
${\eta}^i$ is an i.i.d copy of ${\etabar}$, $n^{(N)}_t = N^{-1}\sum_{i=1}^N
{Y}^i_t$, and 
\begin{align*}\SN_t= \frac{1}{N-1}\sum_{i=1}^N
  ({Y}^i_t-n^{(N)}_t)({Y}^i_t-n^{(N)}_t)^\top \label{eq:SbarN}
\end{align*}
The data assimilation process has a linear feedback control structure
where $S^{(N)}_tC^\top$ is the Kalman gain matrix and $\frac{1}{2} (C
Y_t^i + Cn_t^{(N)})$ is the state feedback term similar to the error in the
FPF~\cite{yang-2016}.  The process serves
  to couple the particles.  Without it, the particles are
  independent of each other.

The finite-$N$ system~\eqref{eq:EnKF-Yi} is referred to as the {\em
  dual EnKF}.  

\medskip

\noindent \textbf{Optimal control:} Set $
\Ybar^i_t  = (\SN_t)^{-1} ({Y}^i_t  - n^{(N)}_t)$.  
There are two cases as before: 
\begin{enumerate}
\item[(i)] If
the matrix $B$ is explicitly known then
\begin{equation}
K_t^{(N)}  = - \frac{1}{N-1} \sum_{i=1}^N \RU^{-1}
  (B^\top  \Ybar^i_{t} )(\Ybar^i_{t})^\top 
\label{eq:KtN}
\end{equation}
\item[(ii)] If $B$ is unknown, define the Hamiltonian 
\begin{align*}
 H^{(N)}(x,\alpha,t) :=
 \underbrace{\half
|Cx|^2 + \half \alpha^\top R \alpha}_{\text{cost function}} +
 \frac{1}{N-1} \sum_{i=1}^N  (x^\top  X^i_{t} )(X^i_{t})^\top \underbrace{(Ax + B\alpha)}_{\text{model}~\eqref{eq:LQ:model-dynamics}} 
\end{align*}
from which the
  optimal control policy is approximated as
\[
\fee_t^{(N)}(x) = \argmin_{a\in\Re^m} H^{(N)}(x,a,t)
\]
There are several zeroth-order approaches to solve the minimization problem, e.g., by
constructing 2-point estimators for the
gradient.  Since the
objective function is quadratic and the matrix $R$ is known, $m$ queries
of $H^{(N)}(x,\cdot,t)$ are sufficient to compute $\fee_t^{
  (N)}(x)$. 

\end{enumerate}

The overall algorithm including its numerical approximation appears in
\ref{app:alg-NL-EnKF}. 





\subsection{Remarks}

The following remarks are included to help provide an intuitive
explanation to various aspects of the dual EnKF.   

	\wham{1. Representation.}
 In designing any RL algorithm, the
first issue is representation of the unknown
value function ($P_t$ in the linear case).  Our novel idea is to represent $P_t$ is in terms of statistics
(variance) of the particles.  Such a representation is fundamentally
distinct from representing the value function, or its proxies, such as
the Q function, within a parameterized class of functions. 
 

	\wham{2. Value iteration.}

The algorithm is entirely simulation based: $N$ copies of the
model~\eqref{eq:LQ:model-dynamics} are simulated in parallel where the
terms on the right hand-side of~\eqref{eq:EnKF-Yi} have the following
intuitive interpretations:

\begin{enumerate}
\item {\em Dynamics:} The first term ``$A {Y}^i_t \ud t$'' on the right-hand side of~\eqref{eq:EnKF-Yi} is
  simply a copy of uncontrolled dynamics in the
  model~\eqref{eq:LQ:model-dynamics}.

\item {\em Control:} The second term is 
the control input $U$ for the $i$-th particle, specified as a white noise process with covariance $\RU^{-1}$.  One may interpret this as an approach to exploration whereby cheaper control
  directions are explored more.   
\end{enumerate}


While there are similarities with traditional approaches to RL, the novelty comes
from the data assimilation process that represents an original contribution.

	\wham{3. Arrow of time.}

The particles are simulated
backward -- from terminal time $t=T$ to initial time $t=0$.  This is 
consistent with the dynamic programming (DP) equation which
also proceeds backward in time.

\vspace*{-0.02in}
\subsection{Convergence and error analysis}\label{sec:conv_anal}
\vspace*{-0.02in}

The mean-field process~\eqref{eq:Ybar} represents the mean-field
limit of the finite-$N$ system~\eqref {eq:EnKF-Yi}, as the number of
particles $N\to \infty$. The convergence analysis is a challenging problem
but impressive progress has been made in some groundbreaking work
appearing in
recent years~\cite{bishop2018stability,delmoral-2020}.  In \ref{app:error-proof}, under
certain additional assumptions on system matrices, the following
error bound is derived:
\begin{equation}\label{eq:error_formula}
\Expect[\|S^{(N)}_{t}-\bar{S}_t\|_F] \leq
\frac{C_1}{\sqrt{N}} + C_2e^{-2\lambda (T-t)} \Expect[\|\SN_T-\Sbar_T\|_F],
\end{equation}
where $C_1,C_2$ are  time-independent positive constants, and $||\cdot||_F$ denotes Frobenius norm for matrices.
The proof largely follows the
techniques developed in~\cite{delmoral-2020}. 



\subsection{Comparison to literature}
\label{sec:LQ-compare}

\wham{Function approximation:} 
Classical RL algorithms for the LQR problem are based on a linear
function approximation, using quadratic basis functions, of the value
function or the Q-function \cite{bradtke-1992,bradtke-1994, lewis-2011-adprl}. 
The basic idea is to run the system for a time horizon $T$, and successively update an estimate of the parameters based on new data collected, using a least-squares approximation. 
Convergence guarantees typically require (i) a persistence of
excitation condition, see e.g. \cite[Equation (9)]{bradtke-1994}, \cite[Remark 3, Page 173]{lewis-2015-tc} and (ii) use of the on-policy methods whereby the
parameters are learned for a given fixed policy (which is subsequently
improved), see e.g. \cite[Page 299]{bradtke-1992}.  
For the deterministic LQR problem, the persistence of
excitation condition is difficult to justify using on-policy RL methods.  
These limitations have spurred recent research on the LQR problem. 


\smallskip

\wham{Policy gradient algorithms:} 
An inspiration for our work comes from the pioneering contributions
of~\cite{mihailo-2021-tac} and~\cite{fazel_global_2018} who consider the infinite-horizon LQR objective (\eqref{eq:LQ} with $T=\infty$).   With  $x_0$ drawn from a given initial distribution $\mathcal{D}$,   and control policies restricted to the linear form $u_t = Kx_t$,  the optimal control problem reduces to the finite-dimensional static optimization problem: 
\begin{align}
K^\star =\argmin_K
	J(K) = \E \left(\int_{0}^{\infty} x_t\tp Qx_t + u_t\tp Ru_t \,
  dt \right)
	\label{eq:mihailo-LQ-cost} 
\end{align}
where the expectation is over the initial condition.    The authors apply a pure-actor method using ``zeroth order'' methods to approximate gradient descent, much like the early REINFORCE algorithm for RL \cite{sutton-barto}.

In a technical tour de force, a Lyapunov function is obtained to
carry out convergence analysis of the approximate gradient descent algorithm. The
result is surprising because the problem is non-convex in $K$. Error
bounds are obtained to quantify the effect of finite $T$ and the finite
number of iterations of the gradient descent algorithm.  The number of particles
$N_g$ is of the order of the dimension of the system~\cite[Section
VIII]{mohammadi_linear_2021}. 



The trade-off between our algorithm and this prior work is as follows:
While policy optimization methods require multiple iterations with a
small number $N_g$ of particles, the EnKF requires {\em only} a single iteration with
relatively larger number $N$ of particles.  Using the EnKF, it is 
not necessary to have a stabilizing initial gain. 

For a quantitative comparison,  consider using the EnKF algorithm to approximate the infinite-horizon optimal gain (or equivalently the solution to the algebraic Ricatti equation). Choosing $t=0$ in~\eqref{eq:error_formula}, the error is smaller than $\varepsilon$ if the number of particles $N>O(\frac{1}{\varepsilon^2})$ and the simulation time $T>O(\log(\frac{1}{\varepsilon}))$, while the iteration number is one. This is compared with policy optimization approach in~\cite{fazel_global_2018} where the number of particles and the simulation time scales polynomially with $\varepsilon$, while the number of iterations scale as $O(\log(\frac{1}{\varepsilon}))$. This result is later refined in~\cite{mihailo-2021-tac} where the required number of particles and the simulation time are shown to be $O(1)$ and $O(\log(\frac{1}{\varepsilon}))$ respectively (although this result is valid with probability that approaches zero as the number of iterations grow~\cite[Thm. 3]{mihailo-2021-tac}.). 

\renewcommand{\arraystretch}{1.2}
\begin{table}[h]
	\centering 
	\begin{tabular}{|c|c|c|c|}
		\hline 
		Algorithm & particles/samples & simulation time & iterations \\\hline
		EnKF & $O(\frac{1}{\varepsilon^2})$  &  $O(\log(\frac{1}{\varepsilon}))$ &  $1$ \\ \hline
			\cite{fazel_global_2018} & $\text{poly}\left(\frac{1}{\varepsilon}\right)$ &  $\text{poly}\left(\frac{1}{\varepsilon}\right)$ &  $O(\log(\frac{1}{\varepsilon}))$ \\ \hline
				\cite{mihailo-2021-tac}  & $O(1)$  &  $O(\log(\frac{1}{\varepsilon}))$ &  $O(\log(\frac{1}{\varepsilon}))$   \\\hline
	\end{tabular}
\caption{Computational complexity comparison of the algorithms to achieve $\varepsilon$ error in approximating the infinite-horizon LQR optimal gain. }
\end{table}
\renewcommand{\arraystretch}{1}

The overall comparison between the three algorithms appears in \Sec{sec:numerics}. 


\section{Nonlinear extensions}
\label{sec:nonlinear}


We return to the nonlinear optimal control
problem~\eqref{eq:nonlinear_opt_control_problem} in \Sec{sec:intro}. 
Its solution is obtained using a
standard DP argument. 


\wham{Dynamic programming:} For $t\in(0,T)$, the value function
\begin{equation}
v_t(x) := \min_{\{u_s:t\leq s\leq T\}} \int_t^T\left(\half |c(x_s)|^2  +
    \half u_s^\top \RU u_s \right) \ud s + g(x_T)
\end{equation}
From the DP optimality principle, the value function solves the HJB equation
\begin{align}\label{eq:HJB}
\frac{\partial v_t}{\partial t}  + \frac{1}{2}c^2 - \frac{1}{2}\nabla
                    v_t^\top D\nabla v_t + a^\top
\nabla v_t  
  = 0, \quad
v_T (x) = g(x), \quad x\in\Re^d 
\end{align}
where ${D}(x):= b(x)\RU^{-1} b^\top(x)$, and the optimal control input is of the state feedback form $u_t=
\fee_t(x_t)$ where
\begin{equation}
\fee_t(x) =  -  \RU^{-1}b^\top(x)\nabla v_t(x),\quad x\in\mathbb{R}^d
\end{equation}  
is the optimal control policy.  
For the LQ special case, the value function $v_t(x) = \half
x^\top P_t x $ is quadratic and the HJB equation~\eqref{eq:HJB}  reduces to the DRE~\eqref{eq:Ricatti}  for
the matrix 
$P_t$.  

In the following, a mean-field
process is introduced to solve the HJB equation based on the use of a log transformation.


\wham{Log transformation:} Define a probability density as
\[
p_t(x) := \frac{\exp({-v_t(x)})}{\int \exp({-v_t(x)}) \ud x},\quad x\in\Re^d
\]  
In~\ref{app:density}, it is shown that the density solves a backward nonlinear PDE:
\begin{align}
\frac{\partial p_t}{\partial t} & =  p_t (\hP_t- \hat{h}_t) - \nabla \cdot (p_t (a-\nabla \cdot D))  
  -  \frac{1}{2}\nabla^2 \cdot
  (p_tD) \nonumber 
\\
p_T(x) & = \frac{\exp({-g(x)})}{\int \exp({-g(x)}) \ud x}, \quad x\in\Re^d \label{eq:filter}
\end{align} 
where
\[
\hP_t (x):=\frac{1}{2}|c(x)|^2 + (\nabla \cdot a)(x) - \frac{1}{2}\nabla^2 \cdot  D(x) + 
\frac{1}{2}\trace\left(({D(x)})\nabla^2 \log(p_t(x)) \right) 
\]
and $\hat{h}_t:=\int h_t(x) p_t(x) \ud  x$.  



\smallskip

Our objective is to design simulations to sample from $p_t$.  As in
the LQ case, the construction proceeds
in two steps: (i) definition of an exact mean-field process and (ii) its finite-$N$
approximation.

\medskip 
\noindent \textbf{Mean-field process:}  A 
mean-field process $\Ybarbar = \{\Ybarbar_t\in \Re^d:  0 \leq t\leq
T\}$ is defined as follows:
\begin{align}
\ud \Ybarbar_t & = a(\Ybarbar_t) \ud t +  b(\Ybarbar_t) \ud
                 \backward{\etabar}_t + \nabla \cdot D(\Ybarbar_t) \ud
                 t +\vP_t (\Ybarbar_t) \ud t,\nonumber\\ 
\Ybarbar_T &\stackrel{\text{d}}{=}  p_T \label{eq:mfp}
\end{align} 
where ${\etabar}:=\{ {\etabar}_t\in\Re^m:0\leq t\leq T\}$ is a W.P. with covariance $\RU^{-1}$, and 
$\vP_t (\cdot)$ is a vector-field  that solves the 
first order linear PDE
\begin{equation}\label{eq:Poisson}
-\frac{1}{\bar{p}_t(x)} \nabla \cdot(\bar{p}_t(x) \vP_t(x) ) = (\hP_t(x)-\bar{
  \hP}_t), \quad \forall \; x\in\Re^d
\end{equation}
where $\bar{ \hP}_t:=\int h_t(x) \bar{p}_t(x) \ud x$ and $\bar{p}_t$ is the
density of $\Ybarbar_t$ at time $t$.

\medskip

The following proposition relates the density $\bar{p}_t(x)$ of the
mean-field process and the value function $v_t(x)$ of the optimal
control problem.  
 Its proof appears in   \ref{app:density-mean-field}. 

\medskip

\begin{proposition}\label{prop:nonlinear}
Suppose $p_T = \bar{p}_T$.  Then
\[
p_t(x) = \bar{p}_t(x),\quad \forall \; x\in\Re^d,\;0\leq t<T
\]
Consequently, the optimal control law is given by
\[
\fee_t(x) =  \RU^{-1}b^\top (x) \nabla \log \bar{p}_t(x)
\]
\end{proposition}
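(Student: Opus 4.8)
The plan is to establish equality of the two densities by showing that the law $\bar p_t$ of the mean-field process $\Ybarbar_t$ solves the same backward PDE \eqref{eq:filter} as $p_t$, and then to invoke uniqueness together with the matching terminal data $\bar p_T = p_T$. First I would write down the evolution equation for $\bar p_t$ obtained from the McKean--Vlasov SDE \eqref{eq:mfp}. Since \eqref{eq:mfp} is integrated backward in time and $b(\Ybarbar_t)\ud\backward{\etabar}_t$ is a backward It\^o integral, I would apply the backward It\^o formula to a smooth test function $\varphi$ --- in which the second-order correction appears with the sign opposite to the forward convention --- take expectations to eliminate the (backward) martingale term, and integrate by parts twice. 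With diffusion matrix $b\RU^{-1}b^\top = D$ (because $\etabar$ has covariance $\RU^{-1}$), this yields a Fokker--Planck equation of the form $\partial_t\bar p_t = -\nabla\cdot(\bar p_t\,\mu_t) - \tfrac12\nabla^2\cdot(\bar p_t D)$, with $\mu_t$ the drift of \eqref{eq:mfp}; the sign of the diffusion term is the one that, in the LQ specialization, reproduces the backward DRE \eqref{eq:backward-DRE} for the covariance.

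The decisive step is to close this (self-referential) equation using the Poisson equation \eqref{eq:Poisson} defining $\vP_t$. Substituting the identity $-\nabla\cdot(\bar p_t\vP_t) = \bar p_t(\hP_t - \bar{\hP}_t)$ converts the transport term generated by $\vP_t$ into the zeroth-order reaction term $\bar p_t(\hP_t - \bar{\hP}_t)$; after combining the remaining first-order ($a$ and $\nabla\cdot D$) contributions with the second-order term, the evolution equation for $\bar p_t$ becomes identical to \eqref{eq:filter}, now read with $\bar p_t$ in place of $p_t$ and $\bar{\hP}_t$ in place of $\hat h_t$. Because $p_t = e^{-v_t}/\!\int e^{-v_t}\ud x$ is shown in \ref{app:density} to solve \eqref{eq:filter}, and the two solutions share the terminal condition $\bar p_T = p_T$ by hypothesis, uniqueness of the backward Cauchy problem for \eqref{eq:filter} gives $\bar p_t = p_t$ for all $0\le t< T$.

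The optimal-control identity then follows immediately from the log transformation: taking gradients in $\log p_t(x) = -v_t(x) - \log\!\int e^{-v_t}\ud x$ gives $\nabla v_t = -\nabla\log p_t$, so that $\fee_t(x) = -\RU^{-1}b^\top(x)\nabla v_t(x) = \RU^{-1}b^\top(x)\nabla\log p_t(x) = \RU^{-1}b^\top(x)\nabla\log\bar p_t(x)$. I expect the main obstacle to be the first two steps rather than this last one: correctly carrying out the backward It\^o calculus so that the signs of both the diffusion term and the divergence-of-$D$ drift correction come out right, and handling the McKean--Vlasov closure through \eqref{eq:Poisson} in which $\hP_t$ itself depends on $\bar p_t$ via $\nabla^2\log\bar p_t$. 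A secondary point is the uniqueness invoked above for the nonlinear nonlocal PDE \eqref{eq:filter}; the cleanest route is to argue on the value-function side, where $v_t$ --- and hence $p_t$ --- is pinned down uniquely by the HJB equation \eqref{eq:HJB}, and to verify that $e^{-v_t}/\!\int e^{-v_t}\ud x$ satisfies the very Fokker--Planck equation derived for $\bar p_t$.
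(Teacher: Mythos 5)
Your proposal takes essentially the same approach as the paper's proof in \ref{app:density-mean-field}: write the Fokker--Planck equation for $\bar{p}_t$ induced by the backward McKean--Vlasov SDE \eqref{eq:mfp} (with the diffusion term's sign flipped by the backward It\^{o} convention), then use the Poisson equation \eqref{eq:Poisson} to convert the $\vP_t$-transport term $-\nabla\cdot(\bar{p}_t\vP_t)$ into the reaction term $\bar{p}_t(\hP_t-\bar{\hP}_t)$, so that $\bar{p}_t$ and $p_t$ satisfy the same evolution equation with the same terminal data. You are in fact somewhat more explicit than the paper, which merely asserts that the two evolution equations are identical and leaves both the uniqueness of the backward Cauchy problem and the final log-transformation identity $\nabla \log \bar{p}_t = -\nabla v_t$ implicit.
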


\wham{Consistency with the LQ setting:} 
With $a(x)=Ax$, $c(x) = C x$, $b(x) = B$, and $p_t={\cal N}(0,P_t^{-1})$ . Then $ \nabla \cdot D= 0 $ and the function $h_t(x)$ simplifies
considerably because  
\begin{align*}
\nabla \cdot a(x) & =(\text{constant}),\quad
\nabla^2 \cdot D = 0,\\ 
\trace\left({D}\nabla^2 \log(p_t(x))
\right) & = (\text{constant})
\end{align*}
Therefore, the right-hand side of the PDE~\eqref{eq:Poisson} is given by
\[
h_t(x)-\bar{h}_t  = \frac{1}{2}|Cx|^2 -  \frac{1}{2}\trace(C^\top C (\bar{S}_t+\bar{n}_t \bar{n}_t^\top ))
\]
It is straightforward to verify that 
\[
\vP_t(x)=\frac{1}{2}S_t C^\top C (x+\bar{n}_t)
\]
solves the PDE~\eqref{eq:Poisson},  from which it follows that  the equation for $\Ybarbar$
reduces to the form described in~\eqref{eq:Ybar}.

The first order PDE~\eqref{eq:Poisson} is well known to arise in the
nonlinear data assimilation literature~\cite{pathiraja2020mckean,daum2017generalized,crisan10}.  One of the issues
with the PDE is that its solution is not unique.  For this reason,
it is useful to consider the gradient form solution such that
$\vP_t(x) =\nabla\phi_t(x)$.  The resulting PDE 
\[
-\frac{1}{\bar{p}_t(x)} \nabla \cdot(\bar{p}_t(x) \nabla\phi_t(x) ) = \hP_t(x)-\bar{
  \hP}_t
\]
is referred to as the Poisson equation,
 where the operator on the
left-hand side is the weighted Laplacian.  Based on assuming a
suitable Poincare inequality, there is a well developed
theory for existence and uniqueness of the solution of the Poisson
equation~\cite[Theorem 1]{laugesen15}. 
 Given its importance in nonlinear filtering, numerical
algorithms for solving the PDE is an area of ongoing
research~\cite{pathiraja2020mckean,taghvaei2019diffusion}.  Approximate formulae for the
solution are also available, e.g., the constant gain approximation
formula~\cite[Example 2]{yang2016}.


\subsection{Dual EnKF for nonlinear systems}
\label{rem:NL-EnKF}

Although one may numerically approximate the solution of the Poisson equation, one
difficulty is that such approximations will require explicit forms of
the vector-fields $a(x)$ and $b(x)$, and will violate
Assumption~\ref{ass:Ass1}.  It is noted that the terms
simplify in the following case:
\begin{enumerate}
\item If $a(x)$ is conservative then $\nabla\cdot a(x) = 0$.  
\item If $b(x)=B$ then $\nabla^2 \cdot D (x) =0$ and $\nabla \cdot D = 0$.
\end{enumerate}    
Upon these simplifications, the mean-field process becomes
\[
\ud \Ybarbar_t = a(\Ybarbar_t) \ud t +  b(\Ybarbar_t) \ud
                 \backward{\etabar}_t +\vP_t (\Ybarbar_t) \ud t
\]
where $\vP_t$ is obtained from solving the PDE~\eqref{eq:Poisson} with 
\[
\hP_t =\frac{1}{2}|c|^2 + 
\frac{1}{2}\trace\left(B^\transpose R^{-1} B\nabla^2 \log(p_t) \right) 
\]
Now, it is natural to consider a Gaussian approximation of the density
$p_t$ whereupon $\hP_t(x) = \frac{1}{2}|c(x)|^2 +
\text{(constant)}$.  This is useful to obtain a dual EnKF
algorithm:
\begin{align*}
 \ud { Y}^i_t & =  \underbrace{a ({ Y}^i_t) 
                    \ud t +  b ({ Y}^i_t) \ud
                    \backward{\eta}^i_t}_{i-\text{th copy of
                model}~\eqref{eq:nonlinear:model-dynamics}}   + {\sf K}_t^{(N)} (\frac{c({ Y}^i_t)
                    + \hat{c}_t^{(N)}}{2}) \ud
   t, \\
{Y}^i_T & \stackrel{\text{i.i.d}}{\sim} \exp({-g_T}),\quad 1\leq i\leq N \nonumber
\end{align*}
where (as before) $\eta^i:=\{\eta_t^i \in \Re^m : i:0\leq t\leq T\}$ is an
 independent copy of $\etabar$, $\hat{c}^{(N)}_{t} := N^{-1}
 \sum_{i=1}^N c(\Ybarbar^i_{t})$, and the gain is a constant matrix:
 \[
 {\sf K}_t^{(N)} = \frac{1}{N-1}\sum_{i=1}^N  ({ Y}^i_t-n^{(N)}_t )(c({ Y}^i_t) - \hat{c}_t^{(N)}  )^\top 
 \]
One may interpret the above as the dual counterpart of the FPF
algorithm with a constant gain approximation~\cite[Example 2]{yang-2016}. 

The optimal control is approximated as in the foregoing via the Hamiltonian,
\begin{align*}
& H^{(N)}(x,\alpha,t) :=  \half
|c(x)|^2 + \half \alpha^\top R \alpha +
 \frac{1}{N} \sum_{i=1}^N  (x^\top  X^i_{t} )(X^i_{t})^\top
  \underbrace{(a(x) + b(x)\alpha)}_{i-\text{th copy of
                model}~\eqref{eq:nonlinear:model-dynamics}}  
\end{align*}
where as before $X_t^i \coloneqq (S^{(N)}_t)^{-1}(Y_t^i - n^{(N)}_t)$.  
Pseudo-code for the
dual EnKF appears in~\ref{app:alg-NL-EnKF}.  

\subsection{Comparison with literature}
\label{sec:NL-comparison}

In the introduction of~\cite{hartmann2012efficient}, the authors write
``{\em Transformations based on an exponential change of measures have
  a rich tradition $\hdots$ and are regularly
  re-discovered}''. Indeed, the pathwise (robust) representation of the nonlinear
filter is based on a log transformation and its link to the HJB equation
is at least as old as the works of~\cite{fleming-1982,fleming1977exit}. 
In the early 2000s, these classical ideas were re-purposed and extended
for the purposes of algorithm design.  There were two sets of ground-breaking contributions: 


\wham{1. Inference as control.} In~\cite{mitter2003}, Mitter and
Newton proposed a dual optimal control formulation of the
nonlinear smoothing equations (see \cite{kim2020optimal} for a recent
review including a discussion of log transformation).    

\wham{2. Control as inference.}
In~\cite{kappen-2005,kappen-2005-jsm}, Kappen described the so called
path integral formulation of optimal control, where the log transformation
is used to convert the HJB equation into a linear equation.  In a
closely related but independent work, Todorov used duality to express
a class of optimal control problems as graphical inference problems~\cite{todorov-2007}.
Both these works continue to impact RL for robotics (a recent
  review is in~\cite{levine-2018}).  

\smallskip

A key idea in these works is the classical connection between  
Kullback-Leibler (KL) divergence and
Bayes' formula: 
Let $\Prob$ denote the law for a stochastic
process $X$ and $\Qrob^z$ denote the conditional law for $X$ given an
observation path $z$ (this is given for inference problems).  Let us construct a controlled process $X^u$
and denote its law as $\Prob^u$ (this is given for control problems).  Assuming $\Prob^u$ is absolutely continuous
with respect to $\Prob$ (denoted $\Prob^u \ll \Prob$), let us
define the objective function as the KL divergence between ${\Prob^u}$
and $\Qrob^z$ as follows:
\begin{equation*}\label{eq:rel-entropy-cost}
\min_{\Prob^u} \quad \E_{{\Prob}^u}\Big(\log \frac{\ud {\Prob}^u}{\ud \Prob}\Big) - \E_{\Prob^u}\Big(\log\frac{\ud \Qrob^z}{\ud \Prob}\Big)
\end{equation*}
In going from inference to control, a model for the controlled process
${X}^u$ is prescribed.  In going from control to inference, the
integral state cost is interpreted as the conditional law $\Qrob^z$
(the second expectation).  Of course,
this places restriction on both the structure of the control system
and the structure of the running cost.  In both Mitter-Newton and in Kappen,
the model structure is as follows:
\[
\ud {X}^u_t = b({X}^u_t,t) \ud t + \sigma({X}^u_t) ( U_t
\ud t + \ud {B}_t)
\]
where $b(\cdot),\sigma(\cdot)$ are $C^1$ vector fields and $\tilde{B}$ is a W.P.
For such a model, $\Prob^u\ll \Prob$ and divergence (the first expectation) equals the quadratic control cost based on the use of the Girsanov
transformation~\cite[Eq.~(35)]{rogers-williams-2000}.  Extension of
these concepts to discrete Markov decision processes (MDP) can be found
in~\cite[Chapter 3]{vhandel-thesis} and is referred to as linearly
solvable MDPs in~\cite{todorov-2007}.  


In~\cite{Kappen2016}, Kappen and Ruiz write ``{\em Despite these elegant theoretical
  results, this idea has not been used much in practice. The essential
  problem is the representation of the controller as a parametrized
  model and how to adapt the parameters such as to optimize the
  importance sampler}''. Indeed, the design of algorithms based on these
ideas remains an important area of research.  

Since our focus is on
inference algorithms for solving optimal control problems, we mention
some salient points:  
The most direct approach is based on exact or approximate inference to
compute the posterior.  Computationally efficient message passing
algorithms for the same are attractive in the linear Gaussian
settings or if the state and action space
is finite~\cite{toussaint-2009,hoffman-2017}.  The optimal control
formulation of the smoothing equations in the linear Gaussian case is completely
classical~\cite[Chapter 15]{kailath2000linear}, as are the message
passing algorithms for the these cases.  In a discrete MDP setting, a relevant 
  example is the posterior policy iteration
  algorithm~\cite[Section ~II-C]{rawlik2013stochastic}.  

For nonlinear SDEs, the
link is again classical -- based on log transformation relating the pathwise
filter and the HJB equation~\cite[Section 3.5]{kim2020optimal}.  The
optimal policy is expressed as a certain Feyman-Kac type expectation
which is approximated using importance sampling.  For MDPs as well, the use of
importance sampling for policy evaluation while sampling from another
(simpler) policy is a standard approach in RL~\cite[Chapter 5.5, 5.7]{sutton-barto}.  It allows the user to explore the state space using an exploratory policy while updating the optimal policy. 



In practice, approximations are necessary.  Based on the KL divergence, a natural
approximation is to parametrize the control policy as $\theta$
  and denote the law of the controlled process as ${P}^\theta$.
  Then policy improvement is obtained using 
\[
\theta \longleftarrow \argmin_{\theta} \; \E_{{P}^\theta}\Big(\log \frac{\ud {P}^\theta}{\ud P}\Big) - \E_{\tilde{P}^\theta}\Big(\log\frac{\ud Q^z}{\ud P}\Big)
\]
The resulting algorithm is referred to as the cross-entropy method in~\cite{Kappen2016} where formulae for the gradient are also obtained and approximated
using importance sampling.  Related concepts and
algorithms appear in a somewhat more general form in~\cite{rawlik2013stochastic} for discrete state-space MDPs.




Given this history, we make the following points to distinguish our
work from this earlier literature:

\wham{1. Log transformation.} 
While our use of the log transformation is same as the path integral approach of
Kappen~\cite{kappen-2005,kappen-2005-jsm},
an important difference is that
for us $p_t$ is a (normalized) probability density.  
The governing equation~\eqref{eq:filter} is nonlinear because of the terms
involving $\bar{h}_t$.  In contrast, the path integral method works
with the un-normalized density whose equation is linear.  The
linearity is crucial for the Feyman Kac formula and its empirical
approximation using importance sampling.  For us, the equation for the
normalized density is necessary because our aim is to construct a
McKean-Vlasov SDE.


\wham{2. Algorithm.} 
The controlled interacting particle
system via a finite-$N$ approximation of the McKean-Vlasov SDE is
original.  It is conceptually and structurally distinct from earlier work, same as the distinction between
important sampling and control-type algorithms in the filtering
context; 
the latter class of algorithms is of much recent origin \cite{reich11,taoyang_TAC12}.  
In
particular, we are not aware of any work using EnKF (or similar
constructions) to
solve an optimal control problem. 

\section{Numerics} \label{sec:numerics}

The performance of the dual EnKF algorithm is numerically evaluated
for three benchmark examples.  In each of the three examples, the
optimal control problem is formulated as an infinite-horizon LQR
problem.  This allows also for a comparison with the state-of-the-art
methods that have focussed on this problem.


In a numerical implementation, the terminal time $T$ is fixed and EnKF is
simulated to obtain an empirical approximation $\{P_t^{(N)}\in\Re^{d\times
  d}:0\leq t<T\}$, typically using $P_T=I$, the identity matrix.
For the sake of comparison, the exact
$\{P_t\in\Re^{d\times d}:0\leq t\leq T\}$ is obtained by numerically 
integrating the DRE~\eqref{eq:Ricatti}.  The stationary solution
$P^\infty$ is obtained as a solution of the ARE using {\tt scipy} package in
Python.  
Pseudo-code is contained in Algorithm \ref{alg:EnKF} of
 \ref{app:alg-NL-EnKF}.  
 				All the code is available on Github \cite{github-repo}.  


\subsection{Linear system with randomly chosen entries}
\label{sec:ex1}
A d-dimensional system is in its controllable canonical form 
\begin{equation*}
A = \begin{bmatrix}
0&1&0&0&\ldots&0\\
0&0&1&0&\ldots&0\\
\vdots & & & & & \vdots\\
a_1 & a_2 & a_3 & a_4 &\ldots & a_d
\end{bmatrix},\quad 
B = \begin{bmatrix}
0\\
0\\
\vdots \\
1 
\end{bmatrix}
\end{equation*}
where the entries $(a_1,\ldots,a_d)\in \mathbb{R}^d$ are i.i.d.
samples from $\normal(0 ,1)$.  The matrices $C,R,P_T$ are identity
matrices of appropriate dimension.  We fix $T=10$, chose the
time-discretization step as $0.02$, and use $N=1000$ particles. 

Figure~\ref{fig:enkf-learns-4} depicts the convergence of the four
entries of the matrix $P_t^{(N)}$  for the case where the state
dimension $d=2$.  
Figure~\ref{fig:enkf-learns-100} depicts the analogous results for
$d=10$. Figures \ref{fig:poles-2} and \ref{fig:poles-10} depict the
open-loop poles (eigenvalues of the matrix $A$) and the closed-loop
poles (eigenvalues of the matrix $(A + BK_0^{(N)})$), for $d=2$ and $d=10$,
respectively. Note that the closed-loop poles are stable, whereas some
open-loop poles have positive real parts.  



	\begin{figure}[h]
		\centering
	\begin{subfigure}{0.49\columnwidth}
         \centering
         \includegraphics[width=1\columnwidth]{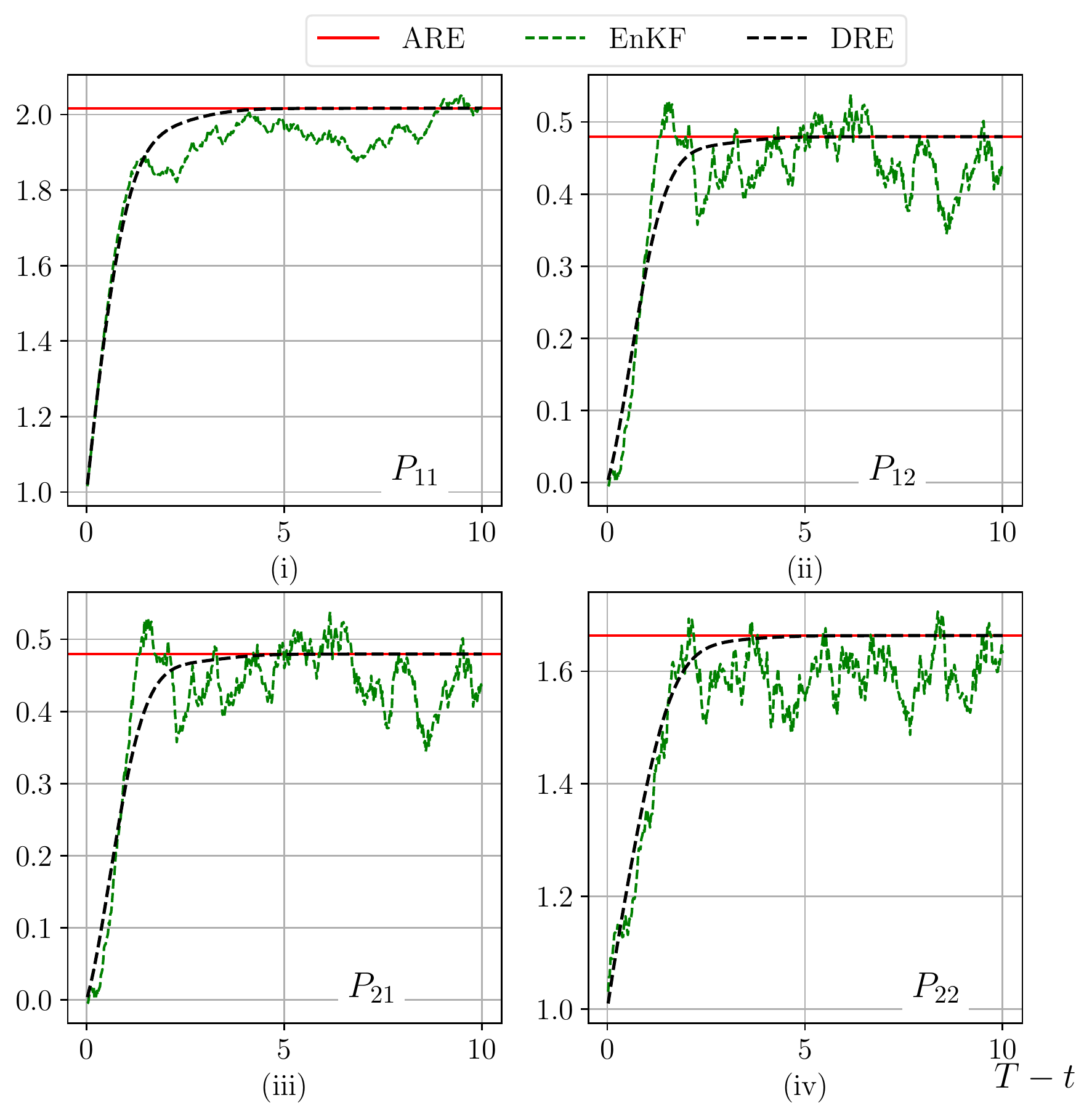}  
         \caption{$d=2$}
         \label{fig:enkf-learns-4}
     \end{subfigure}
     \hfill
	\begin{subfigure}{0.49\columnwidth}
         \centering
         \includegraphics[width=1\columnwidth]{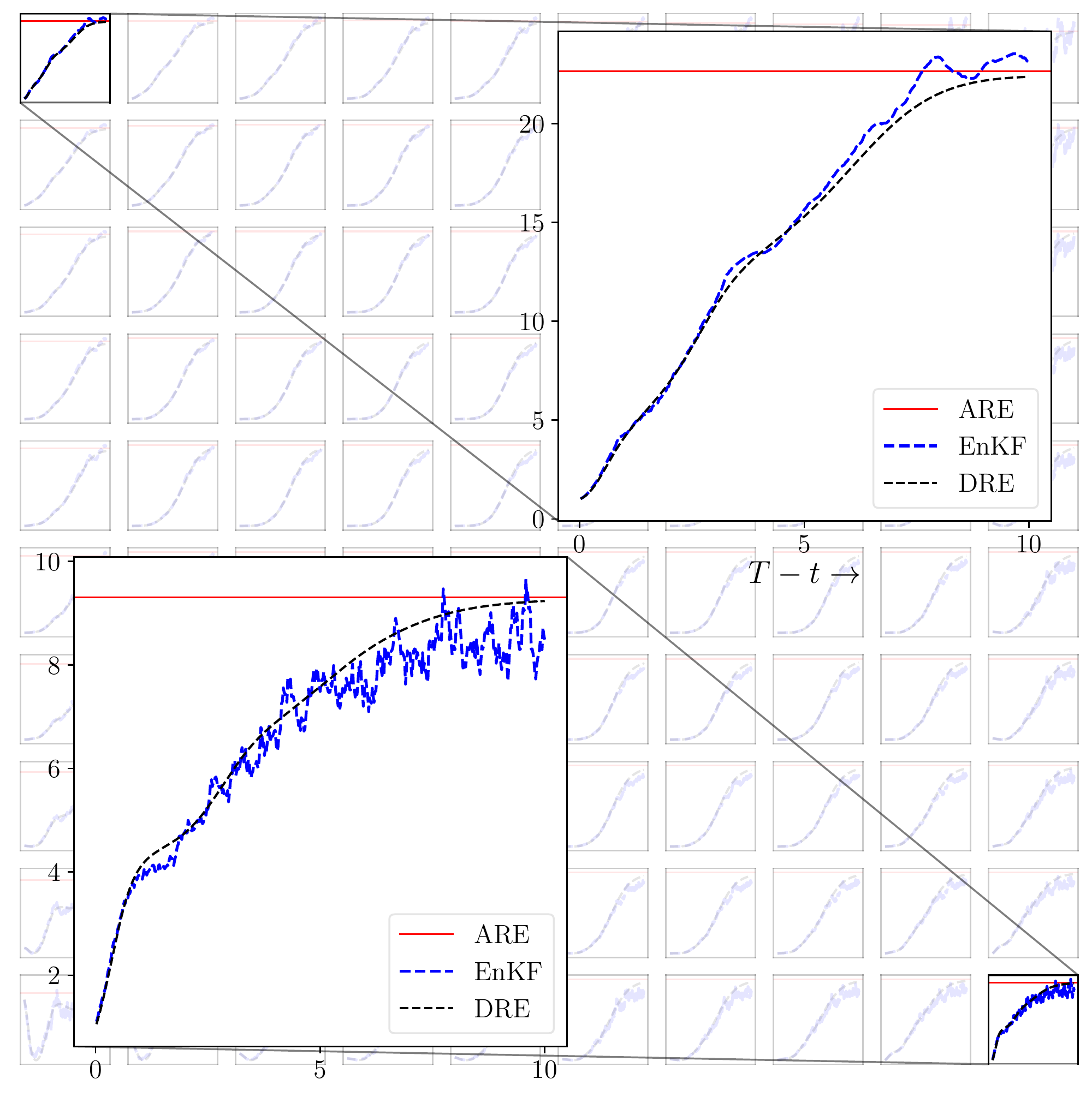}  
         \caption{$d=10$}
         \label{fig:enkf-learns-100}
     \end{subfigure}
%
		\caption{Comparison of the numerical solution obtained
                from the EnKF, the DRE, and the ARE. Note the $x$-axis
              for these plots is $T-t$ for $0\leq t\leq T$. }
		\label{fig:convergence}
		\vspace{-0.1in}
	\end{figure}

\begin{figure}[h]
		\centering
	\begin{subfigure}{0.48\columnwidth}
         \centering
         \includegraphics[width=1\columnwidth]{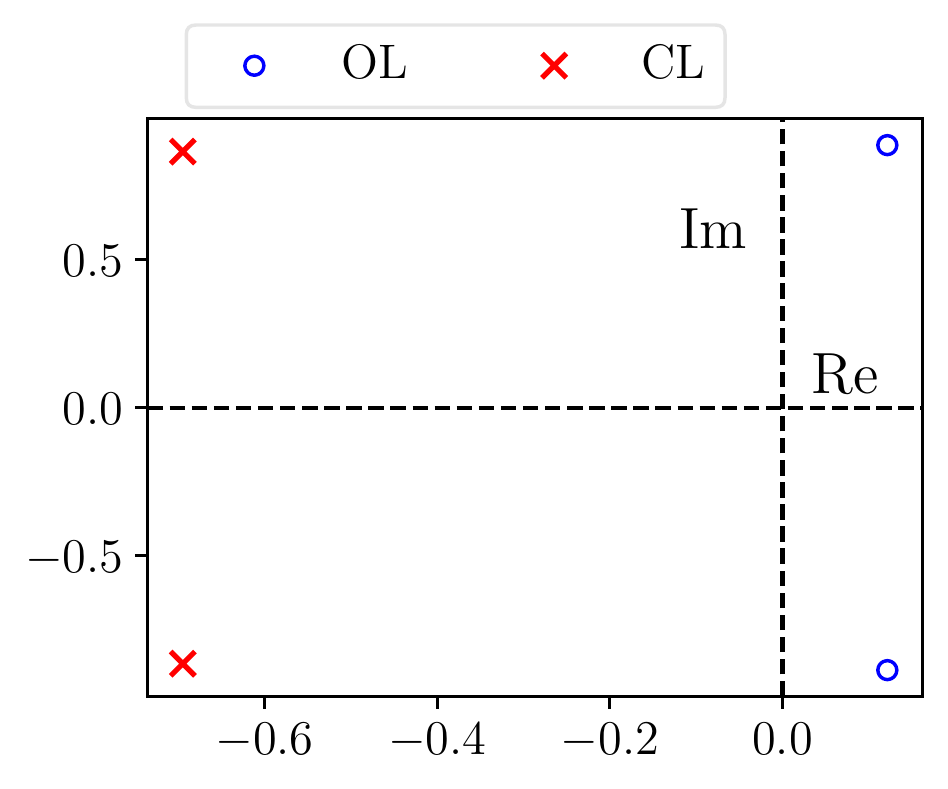}  
         \caption{$d=2$}
         \label{fig:poles-2}
     \end{subfigure}
     \hfill
	\begin{subfigure}{0.48\columnwidth}
         \centering
         \includegraphics[width=1\columnwidth]{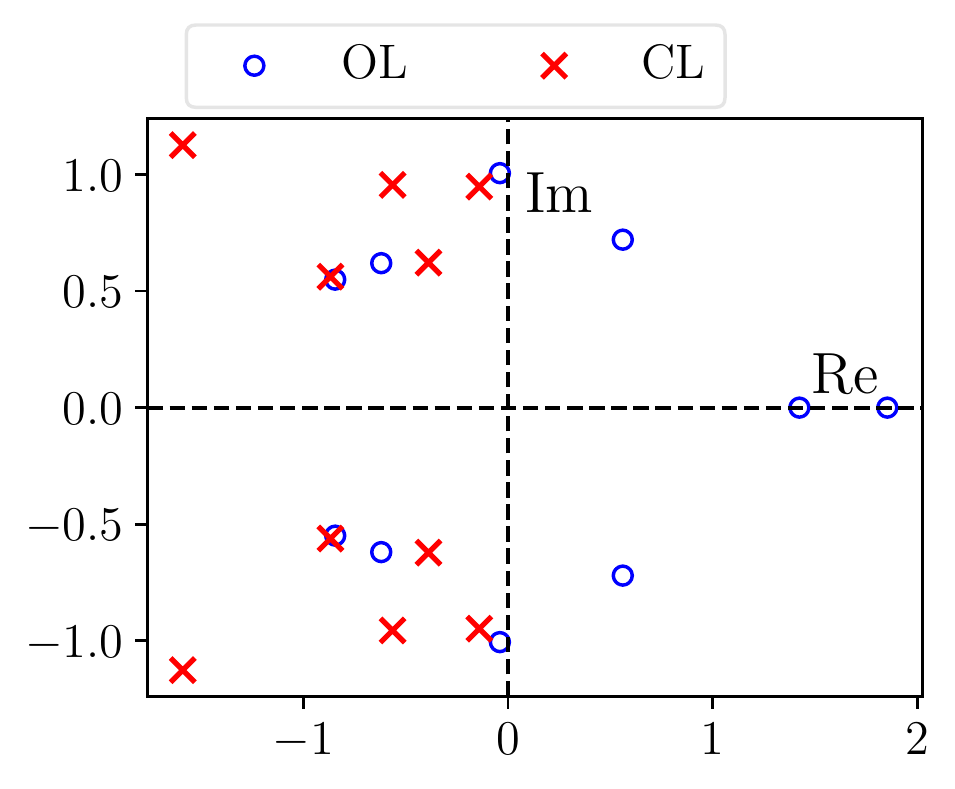}  
         \caption{$d=10$}
         \label{fig:poles-10}
     \end{subfigure}
%
		\caption{Open and closed-loop poles.}
		\label{fig:poles}
		\vspace{-0.1in}
	\end{figure}





	\begin{figure}[h]
		\centering

\includegraphics[width=0.99\columnwidth]{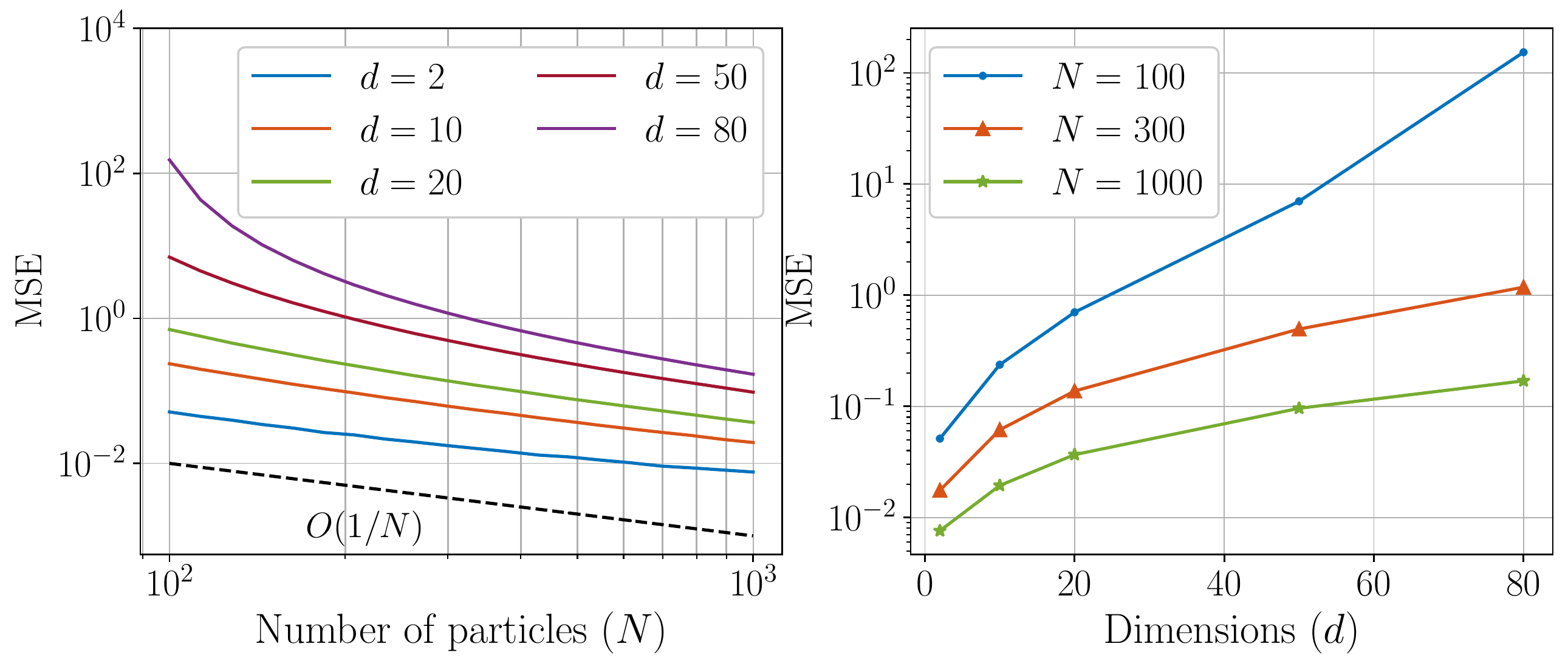}  
 
		\caption{Mean-squared error (MSE) as a function of the
                  number of particles $N$ and system dimension $d$}
        \label{fig:mse}
 	\end{figure}

\begin{figure}[h]
		\centering

\includegraphics[width=0.99\columnwidth]{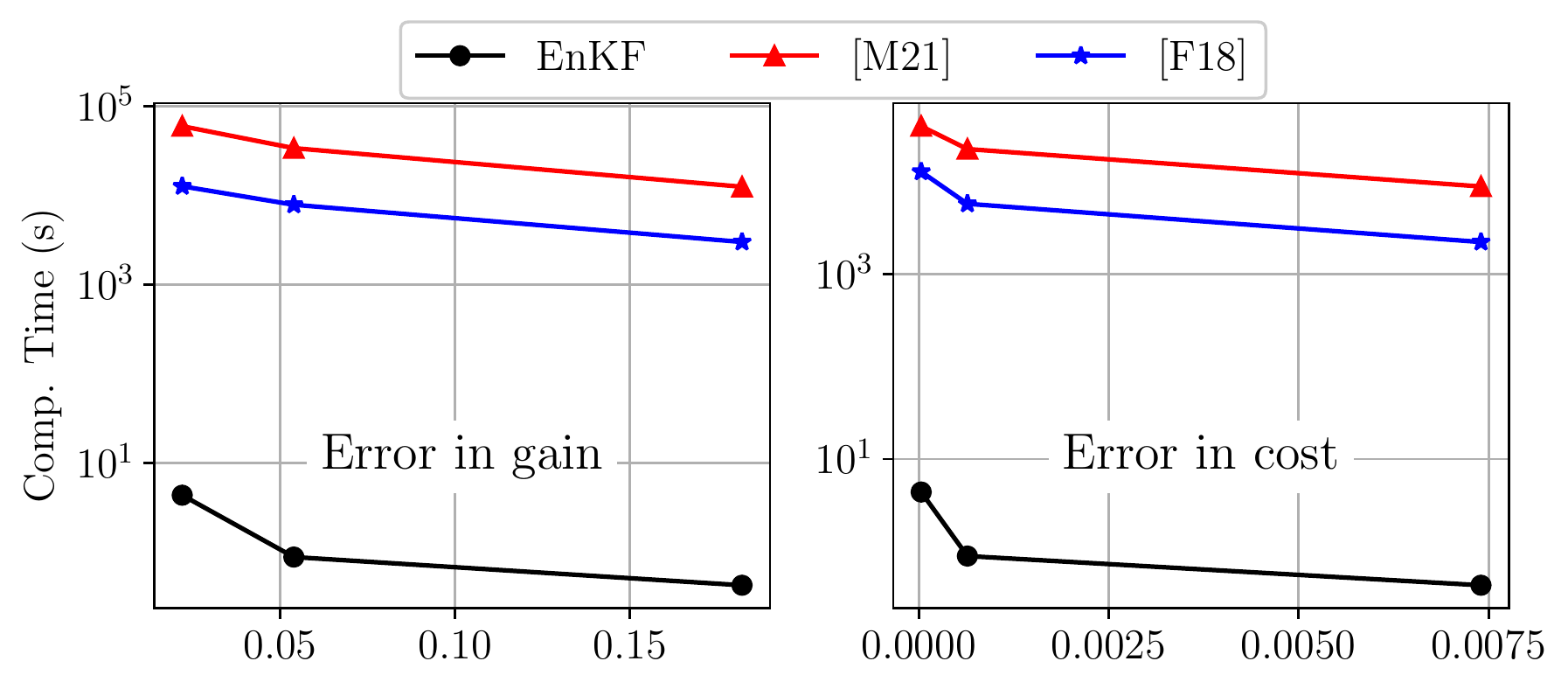}  
 
		\caption{Comparison with algorithms
                  in~\cite{fazel_global_2018} (labeled [F18])
                  and~\cite{mihailo-2021-tac} (labeled [M21]).  The
                  comparisons depict the computation time (in Python) as a
                  function of the relative error in approximating the LQR gain and cost.}\label{fig:comp-to-lit}
 	\end{figure}

\subsection{Mass spring damper system}\label{sec:ex2}

\newcommand{\dsp}{d_s}
\newcommand{\toep}{\mathbb{T}}

We present numerical comparison of EnKF with policy
gradient algorithms in~\cite{mihailo-2021-tac} (denoted as
[M21]) and~\cite{fazel_global_2018} (denoted as [F18]). Comparison is
made on the benchmark spring mass damper example ~\cite[Section
VI]{mohammadi_global_2019}.  Additional details on modeling along with
the numerical
values of various simulation parameters can be found in \ref{app:smd-det}. 


Figure~\ref{fig:mse} depicts the variation of the relative mean-squared error, defined as
\[
\text{MSE} := \frac{1}{T} \E \left( \int_{0}^{T} \frac{\|
  P_t - P_t^{(N)} \|_F^2}{\| P_t \|_F^2} \: \ud t \right)
\]  
The figure depicts two trends: the
  $O(\frac{1}{N})$ decay of the MSE as $N$ increases (for $d$ fixed), which is an illustration of the error bound \eqref{eq:error_formula},
  and a plot of the MSE as a function of dimension $d$ (for $N$ fixed).

A side-by-side comparison with [F18] and [M21] is depicted in
Fig.~\ref{fig:comp-to-lit}.  The comparison is for the following
metrics (taken from \cite{mihailo-2021-tac}):
\[
\text{error}^{\text{gain}}= \frac{\| K^{\text{est}} - K^{\infty} \|_F}{\|K^{\infty}\|_F} \, ,
\quad \text{error}^{\text{value}} =\frac{ c^{\text{est}} - c^{\infty} }{c^{(N)}_{\text{init}} - c^{\infty} }
\]
where the LQR optimal gain $K^{\infty}$ and the optimal value
$c^{\infty}$ are computed from solving the ARE.  The value
$c^{(N)}_{\text{init}}$ is approximated using the initial gain $K=0$
(Note such a gain is not necessary for EnKF).  Because [F18]
is for discrete-time system, we use the Euler approximation to obtain
a discrete-time model.  Such an approximation is consistent with our
choice of numerical integration in Algorithm \ref{alg:EnKF}.  

To obtain the relationship between the error and computational time,
the number of particles $N$ is varied in the EnKF algorithm while the
number of gradient descent steps is changed in [M21] and [F18].

In the numerical experiments, the dual EnKF is found to be
significantly more
computationally efficient--by two orders of magnitude or more. 
Comparison was carried out for a range of $d$ and is qualitatively similar, see
 \ref{app:comparison}.
The main reason for the order of
magnitude improvement in computational time is as follows:  An EnKF
requires only a single iteration over a fixed time-horizon
$[0,T]$. 
We found that the number of particles 
($N$) for the EnKF algorithm is typically one or two orders of magnitude larger than $N_g$. Since our algorithm is designed to be written as a matrix vector multiplication, vectorization features of the  {\tt numpy} package
in Python yield significant gains in computational time.  In contrast, [F18]
and [M21] require several steps of gradient descent, with each step requiring an evaluation of the LQR cost, and because these operations
must be done serially, these computations are slower.
In our comparisons, the same time-horizon $[0,T]$ and discretization
time-step $\Delta t$  was used for all the algorithms.  It is certainly
possible that some of these parameters can be optimized to improve the
performance of the other algorithms. In particular, one may consider
shorter or longer time-horizon $T$ or use parallelization (over the
$N_g$ copies) to speed up the gradient calculation.  Codes are made
available on Github for interested parties to independently verify
these comparisons~\cite{github-repo}.

\subsection{Nonlinear cart-pole system}\label{sec:ex3}

Figure~\ref{fig:ivp} depicts the closed-loop trajectories of a
four-dimensional nonlinear cart pole model.  The control acts as external
force applied to the cart.  The four-dimensional state for the system
is $(\theta,x, \dot{\theta},\dot{x})$, where $\theta\in {\sf S}^1$ (the circle) is the
angle of the pole (pendulum) as measured from the stable equilibrium,
$x\in\Re$ is the displacement of cart along the horizontal. The control objective is
to balance the pole -- stabilize the system
at the inverted equilibrium $(\pi,0,0,0)$, assuming full state
feedback.  (See \ref{app:cp-det} for details on the model
parameters and their numerical values).

For the purposes of control design, the nonlinear system is
first linearized at the desired equilibrium and an LQR problem is formulated based on~\cite[Chapter 3.2]{tedrake-notes}.
The (nonlinear) dual EnKF is used to approximate the
optimal control law which is numerically evaluated on the fully nonlinear model.  Figure~\ref{fig:ivp} depicts the numerically obtained results.  It was
found that reasonable levels of performance is obtained with as few as $N=10$
particles.  With $N=1000$ particles, the closed-loop trajectories are virtually
indistinguishable from the exact optimal control solution.


	\begin{figure}[h]
		\centering
\vspace*{-0.1in}
	\includegraphics[width=0.99\columnwidth]{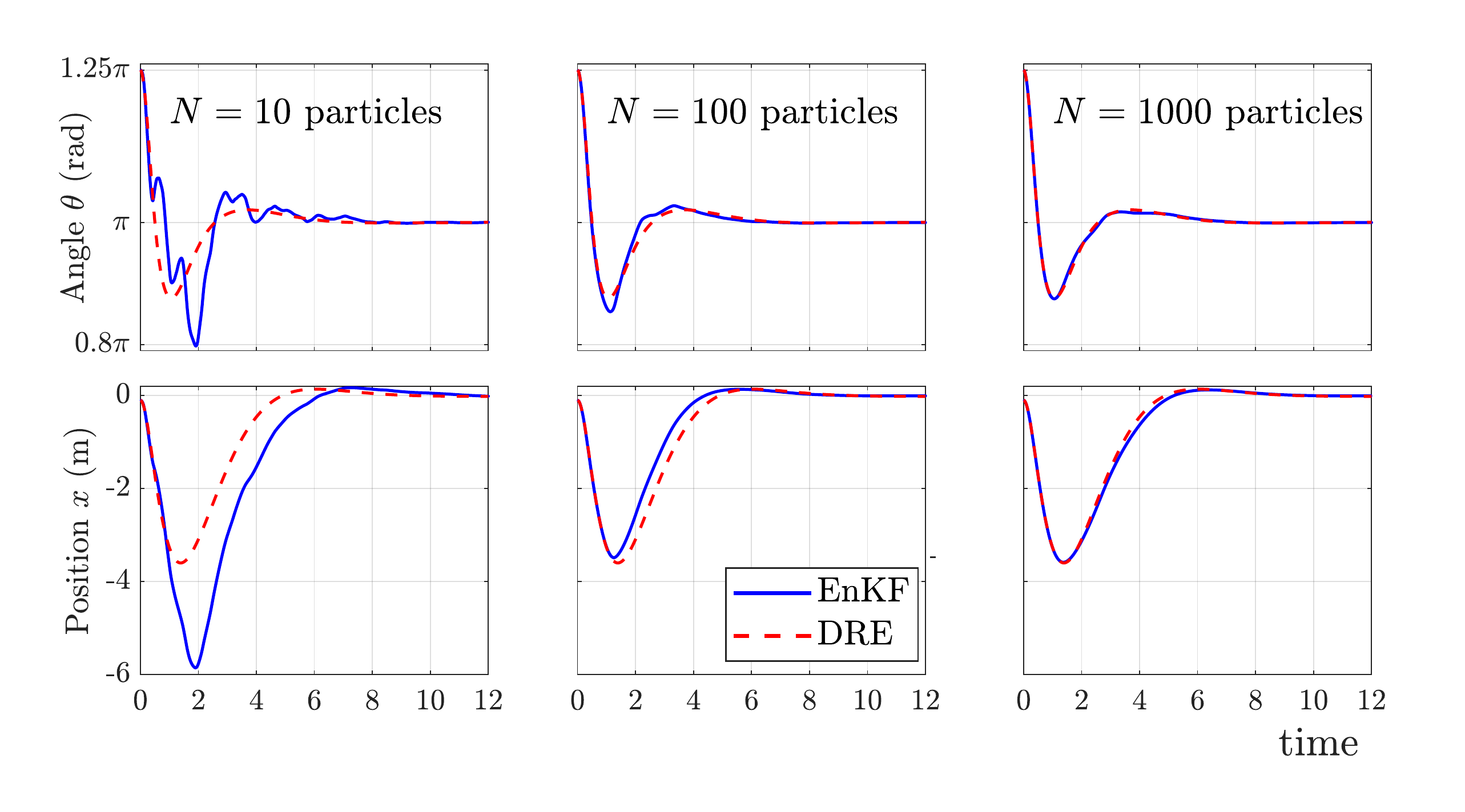}  
\vspace{-0.3in}
		\caption{Trajectories of the closed-loop nonlinear cart pole.}\label{fig:ivp}
\vspace{-0.2in}
	\end{figure}

\section{Conclusions}

In this paper, we present a new class of algorithms for learning
optimal policies using simulations.  A key message is
that log transforms combined with mean field techniques can lead to
simulation based methods for optimal policy approximation.  We have
demonstrated this for LQ in full detail, and shown how the
techniques generalize to nonlinear systems.       

There are two key innovations: (i) the representation of the unknown
value function in terms of the statistics (variance) of a suitably
designed process; and (ii) design of interactions between simulations
for the purposes of policy optimization.  

We fully believe that the two key innovations may be useful for many
other types of models including MDPs and partially observed problems.
For policy evaluation, use of Monte Carlo techniques is already
standard.  It is shown in this paper is that by designing careful interaction
amongst simulations, one can also solve the policy optimization
problem.  

Another notable aspect is the learning rate.  
Because the $N=\infty$ limit is
exact tor the LQR problem, the proposed algorithms yields a learning
rate that closely approximates the exponential rate of convergence of
the solution of the DRE.  This is rigorously established with the aid
of error bound~\eqref{eq:error_formula} (although such an analysis is conservative).  In numerical examples, this property is 
shown to lead to an 
order of magnitude better performance than the state-of-the-art
algorithms.  

Given the enormous success of EnKF in data assimilation~\cite{reich2015probabilistic,evensen2006}, the
contributions of this paper potentially open up new opportunities for
RL.
It is our hope that the paper will engender new synergies between
the data
assimilation and the RL communities.

\appendix

\section{Proof of \Prop{prop:Y-exactness}}
\label{app:prop-y-exact}
The equation for the mean $\bar{n}_t$ is obtained by taking the expectation of SDE~\eqref{eq:Ybar},
\begin{equation*}
\ud \bar{n}_t = (A +  \bar{S}_tC^\top C) \bar{n}_t\ud t
\end{equation*}
Because $\bar{n}_T = 0$, we have $\bar{n}_t=0$ for all $ t \in [0,T]$. 

The equation for the covariance $\bar{S}_t$ is obtained by writing the SDE for the error $e_t := \bar{\mathcal Y}_t  - \bar{n}_t$:
\begin{align*}
\ud e_t &= (A+\frac{1}{2}\bar{S}_t C^\top C) e_t  \ud t + B \ud \backward{\etabar}_t,
\end{align*}  
Using the It\^o rule for $e_te_t^\top$,
\begin{align*}
&\ud (e_te_t^\top) =  B \ud \backward{\etabar}_te_t^\top + e_t (B\ud \backward{\etabar}_t)^\top - B R^{-1}B^\top \\ & \quad + (A+\frac{1}{2}\bar{S}_t C^\top C) (e_te_t^\top)   \ud t +  (e_te_t^\top)(A+\frac{1}{2}\bar{S}_t C^\top C) ^\top   
\end{align*}  
The It\^o correction term appears with a negative sign because the SDE involves a backward Wiener process ~$\backward{\etabar}_t$~\cite[Sec. 4.2]{nualart1988stochastic}. Taking an expectation yields the following equation for $\bar{S}_t$:
\begin{equation*}
\frac{\ud}{\ud t} \bar{S}_t= (A+\frac{1}{2}\bar{S}_t C^\top C)\bar{S}_t+  \bar{S}_t(A+\frac{1}{2}\bar{S}_t C^\top C) ^\top  - B R^{-1}B^\top
\end{equation*}  
The SDE is identical to the SDE for $S_t$. Because $\bar{S}_T =S_T$,
we  have $\bar{S}_t = S_t$  for all $ t \in [0,T]$. The conclusion
that $\Ybarbar_t$ is Gaussian follows from the fact that with
$\bar{n}_t=n_t$ and $\bar{S}_t=S_t$, the SDE for $\Ybarbar_t$ is an
Ornstein-Uhlenbeck SDE with a Gaussian terminal condition.

The proof for the rest of proposition is straightforward.  By definition,
\begin{align*}
\Expect[\Ybar_t] &= \Expect[\bar{S}_t^{-1}(\Ybarbar_t-\bar{n}_t)]=\bar{S}_t^{-1}(\bar{n}_t-\bar{n}_t)=0\\
\Expect[\Ybar_t\Ybar_t^\top] &= \Expect[\bar{S}_t^{-1}(\Ybarbar_t-\bar{n}_t)(\Ybarbar_t-\bar{n}_t)^\top\bar{S}_t^{-1} ] =  \bar{S}_t^{-1}  = S_t^{-1} = P_t
\end{align*} 

\section{Error analysis}
\label{app:error-proof}

\newcommand{\OmegaN}{\Omega^{(N)}}
\newcommand{\BB}{\Sigma_B}

{\noindent \bf Notation:} Let $S^d_{+} \subset S^d \subset \R^{d \times d}$ denote the set of symmetric positive definite matrices and symmetric matrices respectively. Let $\langle Q_1, Q_2\rangle \coloneqq \trace(Q_1Q_2^\top)$ denote the Frobenius inner product, and $||\cdot||_{F} \coloneqq \sqrt{\langle Q_1, Q_1 \rangle}$ denote the Frobenius inner product for $Q_1,Q_2 \in \R^{d \times d}$.  

The objective is to study the error between the empirical covariance of the particles $\SN_t$ and its mean-field limit $S_t$.  To simplify the presentation, we use the time-reversed quantitative $\Omega^{(N)}_t:=S^{(N)}_{T-t}$ and $\Omega_t:=\bar{S}_{T-t}$. 
According to the Proposition~\ref{prop:Y-exactness}, $\Omega_t$ satisfies the Riccati equation
\begin{equation}\label{eq:Sbar-t}
\frac{\ud}{\ud t} \Omega_t =  \Ricc(\Omega_t):=-A\Omega_t  - \Omega_t A^\transpose  - \Omega_tC^\transpose C \Omega_t + \BB,
\end{equation}  
where $\BB:=BR^{-1}B^\top$. The time-evolution for $\Omega^{(N)}_t$ is obtained by the application of the It\^o rule to its definition~\cite[Prop. 4.2]{bishop2020mathematical}
	\begin{equation}\label{eq:SNt}
	\ud \OmegaN_t = \Ricc( \OmegaN_t)\ud t + \frac{1}{\sqrt{N}}\ud M_t,
	\end{equation} 
	where $\{ M_t : t \ge 0 \}$ is a martingale 
	given by
	\begin{align*}
	dM_t &= \frac{1}{\sqrt{N}}\sum_{i=1}^{N}F^{i}_t(B\ud{\eta}_t^i)^\transpose \!\!+ \!\!B \ud {\eta}_t^i (F^i_t)^\transpose, \quad F_{T-t}^i \coloneqq X_{t}^i\! -\! n_{t}^{(N)}
	\end{align*}
	with quadratic variation
		\begin{align*}
 \ud\langle M\rangle_t = \trace(\BB) \OmegaN_t + \BB \trace( \OmegaN_t) + \BB  \OmegaN_t  +\OmegaN_t  \BB 
	\end{align*}

The error analysis is based on a sensitivity  analysis of the Riccati equation. Let $\phi(t,Q)$ denote the semigroup associated with the Riccati equation such that for any positive definite matrix $Q \in S^d_+$, 
\begin{equation*}
	\frac{\partial \phi}{\partial t} (t,Q) = \Ricc(\phi(t,Q)),\quad \phi(0,Q)=Q. 
\end{equation*}
We define the first-order and the second-order derivatives which are the linear and bilinear operators $\frac{\partial \phi}{\partial Q}(t,Q):S^d \to S^d$ and $\frac{\partial^2 \phi}{\partial Q^2}(t,Q):S^d\times S^d\to S^d$ respectively that satisfy
\begin{align*}
\frac{\partial \phi}{\partial Q}(t,Q) (Q_1)   &= \left.\frac{\ud}{\ud \epsilon}\right\vert_{\epsilon=0} \phi(t,Q+\epsilon Q_1) \\
\frac{\partial^2 \phi}{\partial Q^2}(t,Q)(Q_1, Q_1)  &= \left.\frac{\ud^2}{\ud \epsilon^2} \right\vert_{\epsilon=0}\phi(t,Q+\epsilon Q_1) .
\end{align*}
We also let $\| \frac{\partial \phi}{\partial Q}(t,Q) \|_{F,F}$ and $\| \frac{\partial^2 \phi}{\partial Q^2}(t,Q)\|_{F,F}$ denote the induced-norm of these operators with respect to the Frobenius norm.   The following lemma expresses the error as a stochastic integral that involves the semigroup. 
\begin{lemma}
	Consider $\Omega_t$ and $\OmegaN_t$ defined in~\eqref{eq:Sbar-t} and \eqref{eq:SNt} respectively. Then
	\begin{equation}
	\begin{aligned}
		\OmegaN_t-&\Omega_t= \frac{1}{\sqrt{N}}\int_0^t \frac{\partial \phi}{\partial Q}(t-s,\OmegaN_s) (\ud M_s) \\&+ \frac{1}{2N}\int_0^t \frac{\partial^2 \phi}{\partial Q^2}(t-s,\OmegaN_s)(\ud M_s, \ud M_s) \\&+ \phi(t,\OmegaN_0) - \phi(t,\Omega_0) 
	\end{aligned}
\label{eq:integral-representation}
\end{equation}
\end{lemma}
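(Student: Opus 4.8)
The plan is to introduce the auxiliary matrix-valued process $Z_s := \phi(t-s,\OmegaN_s)$ for $0\le s\le t$, which transports the current random covariance $\OmegaN_s$ forward under the deterministic Riccati semigroup over the remaining horizon $t-s$. Its two endpoints are exactly the quantities to be compared: at $s=0$ one has $Z_0 = \phi(t,\OmegaN_0)$, while at $s=t$ the semigroup is evaluated at time zero, so $Z_t = \phi(0,\OmegaN_t) = \OmegaN_t$. Hence $\OmegaN_t - \phi(t,\OmegaN_0) = Z_t - Z_0 = \int_0^t \ud Z_s$, and the whole problem reduces to computing the differential $\ud Z_s$ by the It\^o formula.

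Applying It\^o's rule to $Z_s = \phi(t-s,\OmegaN_s)$ --- viewing $\phi$ as a $C^2$ function of its matrix argument composed with the semimartingale $\OmegaN$, together with an explicit $s$-dependence through the first slot --- produces three contributions: the explicit time-derivative $-\frac{\partial\phi}{\partial\tau}(t-s,\OmegaN_s)\,\ud s$, the first-order term $\frac{\partial\phi}{\partial Q}(t-s,\OmegaN_s)(\ud\OmegaN_s)$, and the second-order term $\frac{1}{2}\frac{\partial^2\phi}{\partial Q^2}(t-s,\OmegaN_s)(\ud\OmegaN_s,\ud\OmegaN_s)$. The key algebraic step I would use is the flow identity $\frac{\partial\phi}{\partial\tau}(\tau,Q) = \frac{\partial\phi}{\partial Q}(\tau,Q)(\Ricc(Q))$, obtained by differentiating the semigroup relation $\phi(\tau+\epsilon,Q) = \phi(\tau,\phi(\epsilon,Q))$ in $\epsilon$ at $\epsilon=0$.

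Substituting the drift $\Ricc(\OmegaN_s)\,\ud s$ of $\OmegaN$ from \eqref{eq:SNt} into the first-order term, its $\ud s$ part becomes $\frac{\partial\phi}{\partial Q}(t-s,\OmegaN_s)(\Ricc(\OmegaN_s))\,\ud s$, which cancels the explicit time-derivative term exactly. What survives of the first-order term is the martingale piece $\frac{1}{\sqrt{N}}\frac{\partial\phi}{\partial Q}(t-s,\OmegaN_s)(\ud M_s)$. In the second-order term only the martingale part contributes to the quadratic covariation, so $(\ud\OmegaN_s,\ud\OmegaN_s) = \frac{1}{N}(\ud M_s,\ud M_s)$. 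Integrating $\ud Z_s$ over $[0,t]$ then yields the two stochastic integrals in the claimed formula, and adding and subtracting $\Omega_t = \phi(t,\Omega_0)$ --- which holds because $\Omega_t$ is the deterministic Riccati flow from $\Omega_0$ by \eqref{eq:Sbar-t} --- converts $\OmegaN_t - \phi(t,\OmegaN_0)$ into $\OmegaN_t - \Omega_t$ at the cost of the initial-condition mismatch term $\phi(t,\OmegaN_0) - \phi(t,\Omega_0)$.

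The main obstacle is justifying the It\^o expansion for the Riccati semigroup: one must verify that $\phi(\tau,\cdot)$ is twice continuously differentiable in $Q$, with the bilinear second derivative entering as stated, and that the random flow $\OmegaN_s$ remains in the domain $S^d_+$ where $\phi$ is smooth throughout $[0,t]$. Smoothness is inherited from the fact that $\Ricc$ is a quadratic (hence analytic) vector field, so standard results on smooth dependence of ODE solutions on their initial data apply on any interval of existence, and the induced norms $\|\frac{\partial\phi}{\partial Q}\|_{F,F}$ and $\|\frac{\partial^2\phi}{\partial Q^2}\|_{F,F}$ are then finite (and will be controlled in the subsequent error estimate). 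Care is also needed in reading the operator-valued shorthand $\frac{\partial^2\phi}{\partial Q^2}(\cdot)(\ud M_s,\ud M_s)$ as the contraction of the Hessian operator against the full quadratic-variation tensor $\ud\langle M\rangle_s$ recorded after \eqref{eq:SNt}, rather than as a naive product of increments.
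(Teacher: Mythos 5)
Your proposal is correct and follows essentially the same route as the paper: both interpolate via the process $\phi(t-s,\Omega^{(N)}_s)$, whose endpoints give $\Omega^{(N)}_t$ and $\phi(t,\Omega^{(N)}_0)$, apply It\^o's formula, cancel the drift against the explicit time derivative using the flow identity $\frac{\partial \phi}{\partial t}(t,Q) = \frac{\partial \phi}{\partial Q}(t,Q)(\text{Ricc}(Q))$, and then add and subtract $\phi(t,\Omega_0)$ to produce the initial-condition mismatch term. Your additional remarks on smoothness of the semigroup and on reading the Hessian term as a contraction against the quadratic-variation tensor are sound elaborations of points the paper leaves implicit.
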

\begin{proof}
	The proof follows by expressing the difference 
\begin{align*}
	\Omega^{(N)}_t -& \Omega_t =  \phi(0,\Omega^{(N)}_t)  - \phi(t,\Omega_0)\\
	&=   \phi(0,\Omega^{(N)}_t)  - \phi(t,\Omega^{(N)}_0)  +\phi(t,\Omega^{(N)}_0) - \phi(0,\Omega_0)
	\\&= \int_0^t\ud_s\phi(t-s,\Omega^{(N)}_s) +  \phi(t,\Omega^{(N)}_0) - \phi(t,\Omega_0),
\end{align*}
and evaluating the differential 
\begin{align*}
	\ud_s \phi(t-s,\Omega^{(N)}_s) &=   -\frac{\partial \phi}{\partial t} 
	(t-s,\Omega^{(N)}_s) \ud s + \frac{\partial \phi}{\partial Q}(t-s,\Omega^{(N)}_s) (\ud \Omega^{(N)}_s)  \\& \quad +  \frac{1}{2}\frac{\partial^2 \phi}{\partial Q^2}(t-s,\Omega^{(N)}_s) (\ud \Omega^{(N)}_s,\ud \Omega^{(N)}_s), 
\end{align*}
and using the identity $ \frac{\partial \phi}{\partial t}(t,Q) =  \frac{\partial \phi}{\partial Q}(t,Q) (\text{Ricc}(Q))$. 
\end{proof}
\medskip

The preceding lemma can be viewed as the extension of the Alekseev-Gr\"obner formula to matrix-valued stochastic differential equations~\cite{del2019forward}. The explicit form of this expression appears in~\cite[Sec. 5.3]{bishop2019stability}. 

The error bound follows from uniform bounds on the terms involved in the integral~\eqref{eq:integral-representation}. Such uniform bounds are available if the Riccati equation enjoys the following stability property. 
\begin{assumption}
	Consider the  semigroup corresponding to the Riccati equation~\eqref{eq:Sbar-t}. There are positive constants $c_1$, $c_2$, and $\lambda$ such that $\forall Q \in S^d_+$: 
	\begin{align*}
		\|\frac{\partial \phi}{\partial Q}(t,Q)\|_{F,F} \leq c_1e^{-2\lambda t},\quad 	\|\frac{\partial^2 \phi}{\partial Q^2}(t,Q)\|_{F,F} \leq c_2e^{-2\lambda t}. 
	\end{align*}
\label{assumption:Ricatti}
\end{assumption}
These bounds are directly related to the exponential stability of the closed-loop linear system under optimal feedback control~\cite[Sec. 2]{bishop2020mathematical}. The exponential decay holds when the linear system is controllable and observable. However, the fact that the constants $c_1$ and $c_2$ are uniform among all initial matrices $Q$ is still open. See~\cite{bishop2020mathematical,bishop2017stability} for detailed analysis of the Riccati equation where these uniform bounds are shown to hold under the additional assumption that  the matrix $C$ is full-rank.



\begin{proposition}\label{prop:error-analysis-appendix}
	Let $\bar{S}_t$ be the mean-field covariance defined in~\eqref{eq:Sbar} and $S^N_t$ be the empirical covariance of the particles defined in \eqref{eq:SNt}. Then, under Assumption~\ref{assumption:Ricatti}, the error between $S^{(N)}_t$ and $\bar{S}_t$ satisfies the upper-bound 
\begin{equation}\label{eq:error_formula-appendix}
\Expect[\|S^{(N)}_{t}-\bar{S}_t\|_F] \leq
\frac{C_1}{\sqrt{N}} + C_2e^{-2\lambda (T-t)} \Expect[\|\SN_T-\Sbar_T\|_F],
\end{equation}
where $C_1,C_2$ are  time-independent positive constants.
\end{proposition}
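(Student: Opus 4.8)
The plan is to start from the integral representation of the preceding lemma, rewritten in the time-reversed variables $\OmegaN_t = S^{(N)}_{T-t}$ and $\Omega_t = \bar{S}_{T-t}$. Since $\|S^{(N)}_t - \bar{S}_t\|_F = \|\OmegaN_{T-t} - \Omega_{T-t}\|_F$, I would evaluate the representation~\eqref{eq:integral-representation} at the forward time $T-t$, note that $\OmegaN_0 = S^{(N)}_T$ and $\Omega_0 = \bar{S}_T$, then take Frobenius norms, take expectations, and apply the triangle inequality to split the estimate into three pieces: a deterministic initial-condition term $\phi(T-t,\OmegaN_0) - \phi(T-t,\Omega_0)$, a first-order martingale integral, and a second-order quadratic-variation integral.

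First I would handle the initial-condition term, which produces the exponentially decaying contribution. Writing it as $\int_0^1 \frac{\partial \phi}{\partial Q}(T-t,\Omega_0 + r(\OmegaN_0 - \Omega_0))(\OmegaN_0 - \Omega_0)\,\ud r$ and invoking the first bound of Assumption~\ref{assumption:Ricatti}, its norm is at most $c_1 e^{-2\lambda(T-t)}\|S^{(N)}_T - \bar{S}_T\|_F$, so in expectation this is exactly the $C_2 e^{-2\lambda(T-t)}\Expect[\|\SN_T - \Sbar_T\|_F]$ term with $C_2 = c_1$. Next, for the first-order term I would bound its expected squared Frobenius norm by the It\^o isometry (or BDG) for matrix-valued martingales, obtaining $\frac{1}{N}\Expect\int_0^{T-t}\|\frac{\partial\phi}{\partial Q}(T-t-s,\OmegaN_s)\|_{F,F}^2\,\ud\langle M\rangle_s$, which by the operator bound $c_1 e^{-2\lambda(T-t-s)}$ together with the explicit quadratic-variation formula (linear in $\OmegaN_s$ and $\trace(\OmegaN_s)$) reduces to $\frac{c_1^2}{N}\int_0^{T-t} e^{-4\lambda(T-t-s)}\Expect[g(\OmegaN_s)]\,\ud s$ for a linear functional $g$. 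The second-order term is treated analogously: in expectation $(\ud M_s,\ud M_s)$ is replaced by $\ud\langle M\rangle_s$, and the second bound of Assumption~\ref{assumption:Ricatti} yields a contribution of order $\frac{1}{N}$, hence $O(1/\sqrt{N})$.

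The main obstacle, and the step requiring the most care, is to establish a bound on $\Expect[\trace(\OmegaN_s)]$ (equivalently $\Expect[\|\OmegaN_s\|_F]$) that is uniform in both $s$ and $N$, since this is what renders the quadratic-variation integrals finite with constants independent of $t$ and $T$. I would obtain this from the dynamics~\eqref{eq:SNt}: taking the trace, the drift $\trace(\Ricc(\OmegaN_s))$ is dominated for large $\trace(\OmegaN_s)$ by the negative quadratic term $-\trace(\OmegaN_s C^\transpose C\,\OmegaN_s)$, so a Lyapunov/comparison argument on $\Expect[\trace(\OmegaN_s)]$ (the martingale part being mean-zero) yields the required uniform moment bound, using the controllability and observability hypotheses already underlying Assumption~\ref{assumption:Ricatti}. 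With this bound in hand, the exponential integrals satisfy $\int_0^{T-t} e^{-4\lambda(T-t-s)}\,\ud s \le \frac{1}{4\lambda}$ uniformly in $t$ and $T$; combining the three estimates, passing from mean-square to mean via Jensen's inequality on the stochastic terms, and absorbing all $s$-, $t$- and $T$-independent factors into a single constant $C_1$ gives the claimed bound~\eqref{eq:error_formula-appendix}.
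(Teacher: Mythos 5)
Your proposal follows the paper's proof essentially step for step: the same integral representation from the preceding lemma, the same three-term triangle-inequality split, Jensen plus the It\^o isometry combined with the first bound of Assumption~\ref{assumption:Ricatti} for the martingale term, the second bound for the quadratic-variation term, and the first-derivative bound for the initial-condition term (giving $C_2=c_1$, exactly as in the paper). The one place you genuinely diverge is the uniform-in-$(s,N)$ moment bound on $\Expect[\trace(\Omega^{(N)}_s)]$: the paper simply imports it from \cite[Thm.~5.2]{bishop2020mathematical}, via $\Expect[\trace(\Omega^{(N)}_t)]\leq \sup_{t\geq 0}\trace(\Sigma_t)=:\sigma^2$, whereas you sketch a self-contained Lyapunov/comparison argument on the trace dynamics. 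That argument does work, but note that it requires $C^\top C\succ 0$ (full column rank of $C$), so that $\trace(\Omega\, C^\top C\,\Omega)\geq \frac{\lambda_{\min}(C^\top C)}{d}\bigl(\trace\Omega\bigr)^2$ dominates the linear drift term $-2\trace(A\Omega)$ for large trace; observability alone does not give this pointwise-in-time domination. This is consistent with the paper's own caveat that the uniform bounds of Assumption~\ref{assumption:Ricatti} are only known to hold when $C$ is full rank, so your route is legitimate: it trades the external citation for an extra rank condition that is implicitly assumed anyway, and in exchange buys a more elementary, self-contained justification of the moment bound (one should still note, for full rigor, that exchanging expectation and differentiation and treating $M$ as a true rather than local martingale requires a standard localization step).
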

\begin{proof}
	Using~\eqref{eq:integral-representation} and the triangle inequality,  the expected norm of the difference satisfies
	\begin{align*}
		\Expect[\|\OmegaN_t-\Omega_t\|_F] \leq \frac{R_1}{\sqrt{N}} + \frac{R_2}{2N} + R_3
	\end{align*}
where 
\begin{align*}
	R_1 &= \Expect \left[\left\|\int_0^t \frac{\partial \phi}{\partial Q}(t-s,\Omega_s)(\ud M_s)\right\|_F\right]\\ 
	R_2 & = \Expect  \left[\int_0^t \left\|\frac{\partial^2 \phi}{\partial Q^2}(t-s,\Omega_s)(\ud M_s,\ud M_s)\right\|_F\right]  \\ 
	R_3 & = \Expect \left[\left\| \phi(t,\OmegaN_0) - \phi(t,\Omega_0)\right \|_F\right]
\end{align*}
The first term
\begin{align*}
	R_1 
	&\leq  \left[\Expect\left[ \left\|\int_0^t \frac{\partial \phi}{\partial Q}(t-s,\Omega_s)(\ud M_s)\right\|_F^2 \right]\right]^\half\\
	&=\left[\int_0^t \Expect \left[\left\| \frac{\partial \phi}{\partial Q}(t-s,\Omega_s)(\ud M_s)\right\|_F^2\right] \right]^\half\\
	& \leq \left[\int_0^t \Expect \left[\| \frac{\partial \phi}{\partial Q}(t-s,\Omega_s)\|_{F,F}^2 \|\ud M_s\|_F^2\right] \right]^\half   \\
	&\leq \left[\int_0^t 4c_1^2e^{-4\lambda (t-s)}\trace(\Sigma_B)\Expect[\trace(\OmegaN_s)]\ud s\right]^\half
\end{align*}
where we used Jensen's inequality in the first step, It\"o isometry in the second step, and Assumption~\ref{assumption:Ricatti} in the last step. The second term,
\begin{align*}
	R_2 &\leq \Expect  \left[\int_0^t \|\frac{\partial^2 \phi}{\partial Q^2}(t-s,\Omega_s) \|_F \| \ud M_s\|^2_F\right]\\
	&\leq  \int_0^t 4c_2e^{-2\lambda (t-s)}\trace(\Sigma_B) \Expect[\trace(\OmegaN_s) ]\ud s
\end{align*}
	where we used Assumption~\ref{assumption:Ricatti}. The third term,
	\begin{align*}
	R_3 \leq c_1e^{-2\lambda t}\Expect[\|\OmegaN_0 - \Omega_0\|_F]
	\end{align*} 
because of the bound on the first derivative in  Assumption~\ref{assumption:Ricatti}. Upon using the bound $\Expect[\trace(\OmegaN_t)] \leq \trace(\Sigma_t) \leq \sup_{t\geq  0} \trace(\Sigma_t) =:\sigma^2$ from~\cite[Thm. 5.2]{bishop2020mathematical}, we conclude
\begin{align*}
		\Expect[\|\OmegaN_t-\Omega_t\|_F] \!\leq \!(c_1 \!+ \!c_2\sqrt{\epsilon})\sqrt{\epsilon} \!+ \!c_1e^{-2\lambda t} \Expect[\|\OmegaN_0 - \Omega_0\|_F]
	\end{align*}
where $\epsilon: = \frac{\sigma^2\trace(\Sigma_B)}{\lambda N}$. Changing $t$ to $T-t$ concludes the proof.  
\end{proof}

%
%

\section{Evolution of density in \eqref{eq:filter}}
\label{app:density}

By definition, the probability density 
\begin{align*}
p_T(x) & = \frac{\exp({-g(x)})}{\int \exp({-g(x)}) \ud x} \, , \quad x\in\Re^d
\end{align*}
Write $v_t  = -\log(p_t) + \beta_t$ where $\beta_t \coloneqq \log (\int p_t(x) dx)$ is a
time-dependent constant to ensure $p_t$ is normalized.  In terms of
$p_t$ and $\beta_t$, the HJB equation~\eqref{eq:HJB} for $v_t$ is written as 
\begin{multline*}
- \frac{1}{p_t}\frac{\partial p_t}{\partial t}  + \dot{\beta}_t + \frac{1}{2} |c|^2 - \frac{1}{p}  a^T \nabla p - \frac{1}{2p_t} \trace(D\nabla^2 p_t ) -\frac{1}{2} \trace((Q-D)\nabla^2 \log(p_t) ) = 0
\end{multline*}
where we used $ \nabla^2 \log(p_t) = \frac{1}{p_t} \nabla^2p_t  - \frac{1}{p_t^2}\nabla p_t \nabla p_t^\top$. 
Multiplying by $p_t$ yields 
\begin{equation*}
\frac{\partial p_t}{\partial t}  = (h_t + \dot{\beta}_t)p_t -\nabla \cdot(p_ta)   + \nabla \cdot(p_t \nabla \cdot D ) - \frac{1}{2}  \nabla^2 \cdot(p_tD)
\end{equation*}
where we used 
\begin{align*}
h_t := \frac{1}{2}|c|^2 + & \nabla \cdot a - \frac{1}{2}\nabla^2 \cdot D + \frac{1}{2}\trace((D-Q)\nabla^2 \log(p_t)) \\
a^T\nabla p _t  &=\nabla \cdot(p_t a)-  p_t\nabla \cdot a\\
\trace(D\nabla^2p_t)& = \nabla^2 \cdot(p_tD) - 2\nabla \cdot(p_t \nabla \cdot D ) + p_t \nabla^2 \cdot D 
\end{align*}
Noting $\int \frac{\partial p_t}{\partial t} \ud x = 0$, we obtain 
\begin{equation*}
\dot{\beta}  = -\int h_t(x) p_t(x)\ud x = -\hat{h}_t
\end{equation*} 
which in turn gives the PDE~\eqref{eq:filter} for $p_t$. 

\section{Proof of \Prop{prop:nonlinear}}
\label{app:density-mean-field}

The proof for $\bar{p}_t = p_t$ follows from showing that the
evolution  equation for $\bar{p}_t$ and $p_t$ are identical. Consider
the SDE~\eqref{eq:mfp}.  The evolution equation for the density
$\bar{p}_t$ is the Fokker-Planck equation:
\begin{equation*}
\frac{\partial \bar{p}_t}{\partial t}  = - \nabla \cdot (\bar{p}_t a)  - \nabla\cdot(\bar{p}_t\nabla \cdot D)  - \nabla \cdot (\bar{p}_t\mathcal V_t)  -\frac{1}{2} \nabla^2\cdot (\bar{p}_tD)
\end{equation*} 
where the diffusion term $\frac{1}{2} \nabla^2\cdot (\bar{p}_tD)$
appears with a negative sign because $\backward{\bar{\eta}}_t$ is a
backward Wiener process.  

It is easily see that if the vector-field $\vP_t (\cdot)$ solves the
PDE~\eqref{eq:Poisson} then the evolution equations for $p_t$ and
$\bar{p}_t$ are identical.  

\section{Algorithm for implementing nonlinear dual EnKF}

\label{app:alg-NL-EnKF}

The algorithm to approximate the optimal control policy for \eqref{eq:nonlinear_opt_control_problem} is divided into an online and offline component. 

\wham{Offline algorithm.} (Algorithm \ref{alg:P}) to compute $\{P_t^{(N)}:0\leq
t\leq T\}$.  It is based on the
finite-$N$ approximation of the dual EnKF~\eqref{eq:EnKF-Yi}.  
For a numerical solution of the SDE, we
use the simplest Euler scheme which can be swapped with a higher order
scheme.   
 
\wham{Online algorithm.} (Algorithm \ref{alg:EnKF}) to compute the optimal control for
a given state $X_t=x$ at time $t$.  In addition to the simulator, this
algorithm also requires $P_t^{(N)}$ computed from the offline
  algorithm.  It is based on minimizing the Hamiltonian function.

The algorithm is described for the general nonlinear case.  The LQ is
the special case when $f(x,u)=Ax+Bu$ and $c(x) = Cx$.   

In a numerical implementation of the offline algorithm, there are two
sources of error: (i) because of finite-$N$ approximation;
and (ii) because of time-discretization step size $\Delta t$.  The
first type of error scales as $O(\frac{1}{\sqrt{N}})$ as shown in the
bound~\eqref{eq:error_formula}.  For SDEs, the second type of error scales as
$O(\Delta t)$ using the Euler
scheme~\cite{platen1999numerical}.  

\begin{algorithm}
	\caption[Offline]{\textbf{[offline]} EnKF algorithm to approximate
      $\{P_t:0\leq t\leq T\}$}
	\label{alg:P}
	\begin{algorithmic}[1]
		\REQUIRE Simulation time $T$, simulation step-size $\stepsize$, number of particles $N$, simulator $f(x,u) =a(x) + b(x)u$, terminal cost $g_T$, cost function $c(x)$, and control cost matrix $R$.
		\RETURN $\{P^{(N)}_k(\cdot) : k=0,1,2,\ldots,\frac{T}{\stepsize} -1\}$
		\STATE  $T_F = \frac{T}{\stepsize}$ 
		\STATE  Initialize $\{{Y}^i_{T_F}\}_{i=1}^N\iid \exp({-g_T})$
		\STATE calculate $n^{(N)}_{T_F} = N^{-1}\sum_{i=1}^N { Y}^i_{T_F}$ 
		\FOR{$k=T_F$  to $1$} 
		\STATE Calculate 
		$\hat{c}^{(N)}_k= N^{-1}\sum_{i=1}^N c({ Y}^i_k)$  
		\STATE Calculate	$ M^{(N)}_k= (N-1)^{-1}\sum_{i=1}^N ({ Y}^i_k-n^{(N)}_k)(c({ Y}^i_k) - \hat{c}^{(N)}_k)^\top$
		\FOR{$i=1$ to $N$}
		\STATE $\Delta \eta_k^i \iid \normal(0,\frac{1}{\stepsize}R^{-1})$
		\STATE $\Delta{Y}^i_k = f({ Y}^i_k,\Delta \eta_k^i ) \Delta t +\frac{1}{2}
		M^{(N)}_k (c({ Y}^i_k)+\hat{c}^{(N)}_k) \Delta t$
		\STATE ${Y}^i_{k-1} = {Y}^i_k - \Delta{Y}^i_k$
		\ENDFOR	
		\STATE Calculate $n^{(N)}_{k-1} = N^{-1}\sum_{i=1}^N { Y}^i_{k-1}$ 
		\STATE Calculate	$ S^{(N)}_{k-1}= (N-1)^{-1}\sum_{i=1}^N ({ Y}^i_{k-1} -n^{(N)}_{k-1})({ Y}^i_{k-1} - n_{k-1}^{(N)})^\top$
		\STATE  $P^{(N)}_{k-1} = (\SN_{k-1})^{-1}$		
		\ENDFOR  	
	\end{algorithmic}
\end{algorithm}

\begin{algorithm}
    \caption{\textbf{[online]} EnKF algorithm to calculate optimal control for \eqref{eq:nonlinear_opt_control_problem}}
    \label{alg:EnKF}
    \begin{algorithmic}[1]
    	\REQUIRE Simulation time $T$, simulation step-size $\stepsize$, number of particles $N$, $\{P^{(N)}_k : k=0,1,2,\ldots,\frac{T}{\stepsize} \}$ from the offline algorithm~\ref{alg:P}, Hamiltonian function $ \mathcal{H}(x,y,\alpha) = y^T(a(x)+b(x)\alpha) +  \frac{1}{2}|c(x)|^2+\frac{1}{2}\alpha^\top R\alpha$, $\{e_i\}_{i=1}^{m}$ the standard basis of $\R^m$
    	\RETURN optimal control input  $\{  u_k^{(N)} \in \R^m : k=0,1,2,\ldots,\frac{T}{\stepsize} -1  \}$.
   	\STATE Define $T_F\coloneqq \frac{T}{\stepsize}$ 
%
        \FOR{$k=0$ to $T_F-1$}
                	\STATE Observe state of the system, denoted $x_k$
					\STATE Define $y_k = P_k x_k$          			
          			\FOR{$i=1$ to $m$}        
        	\STATE  $\langle u_k^{(N)}, e_i \rangle = \mathcal{H}(x_k,y_k,R^{-1}e_i) -  \mathcal{H}(x_k,y_k,0) - \frac{1}{2}(R^{-1})_{ii}$ 
        	\ENDFOR
          			\STATE Apply control $u^{(N)}_k$ to the true system           			     	
        \ENDFOR
    \end{algorithmic}
\end{algorithm}

\section{Details of Example \ref{sec:ex2}}

\subsection{Coupled mass spring damper system}
\label{app:smd-det}



This system is taken from~\cite{mohammadi_global_2019}.
The matrices $A$ and $B$ are as follows:
\begin{align*}
A = \begin{bmatrix}
0_{\dsp \times \dsp} & \id_{\dsp} \\ -\toep & -\toep
\end{bmatrix}, \quad B = \begin{bmatrix}
0_{\dsp \times \dsp} \\ \id_{\dsp}
\end{bmatrix}
\end{align*}
 where $\dsp=\frac{d}{2}$  is the number of masses and  $\toep \in
 \R^{\dsp \times \dsp}$ is a  Toeplitz matrix with $2$ on the main
 diagonal and $-1$ on the first sub-diagonal and first
 super-diagonal. Numerical values of parameters used in simulations are listed 
 in Table \ref{table:smd-mod}. 

\begingroup
\renewcommand*{\arraystretch}{1.4}
\begin{table}[h]
\centering
\caption{Model parameters for the coupled mass spring damper system}
\begin{tabular}{c c c} \hline
Parameter Name & Symbol & Numerical value \\ \hline
& \textbf{Model Parameters} & \\
\multirow{4}{*}{LQ parameters} & $C$ for $d=2$ & $\sqrt{5}\id_{d}$\\ 
& $C$ for $d > 2$ & $\id_{d}$\\ 
& $R$ & $\id_{\dsp}$\\ 
& $P_T$ & $\id_{d}$ \\ \hline
& \textbf{Simulation Parameters} & \\
Simulation time & $T$ & 10 \\ 
Step size & $\stepsize$ & 0.02 \\ \hline
\end{tabular}
\label{table:smd-mod}
\end{table}
\endgroup

\subsection{Comparison between EnKF and policy-gradient methods}

\label{app:comparison}

The hyper-parameters required to implement the algorithms of [M21], and [F18] algorithms are as follows.
The simulation time horizon $T=10$, and the step-size $\Delta t=0.01$ is the same for all of EnKF, [F18] and [M21].  The initial guess $K^0 = 0$, initial distribution $\mathcal{D}^0 = \normal(0,\id_d)$, and gradient descent step $\alpha = 0.0001$ for both [M21] and [F18]. The values of the other hyper parameters, namely the smoothing parameter $r$ and number of particles in gradient calculation $N_g$ are in Table \ref{table:params-gd-val}. The numerical results for $d=10$ are depicted in Figure \ref{fig:comp-to-lit}  and for $d=2,4$ in Figure \ref{fig:error-k}. Additionally, Figure \ref{fig:error-c} shows comparison for error in cost. While calculating cost, the system is initialised with a $\normal(0,0.1\id_d)$ distribution to keep the simulation setup as close to the setting of [M21] and [F18] as possible.

The simulations are  implemented in Python 3 on a Intel Xeon E3-1240 V2 3.40 Ghz CPU, and the {\tt process\textunderscore time() } function from the {\tt time} module is used to evaluate the execution time.

\begin{figure}[h]
		\centering
\vspace*{-0.1in}
	\includegraphics[width=0.8\columnwidth]{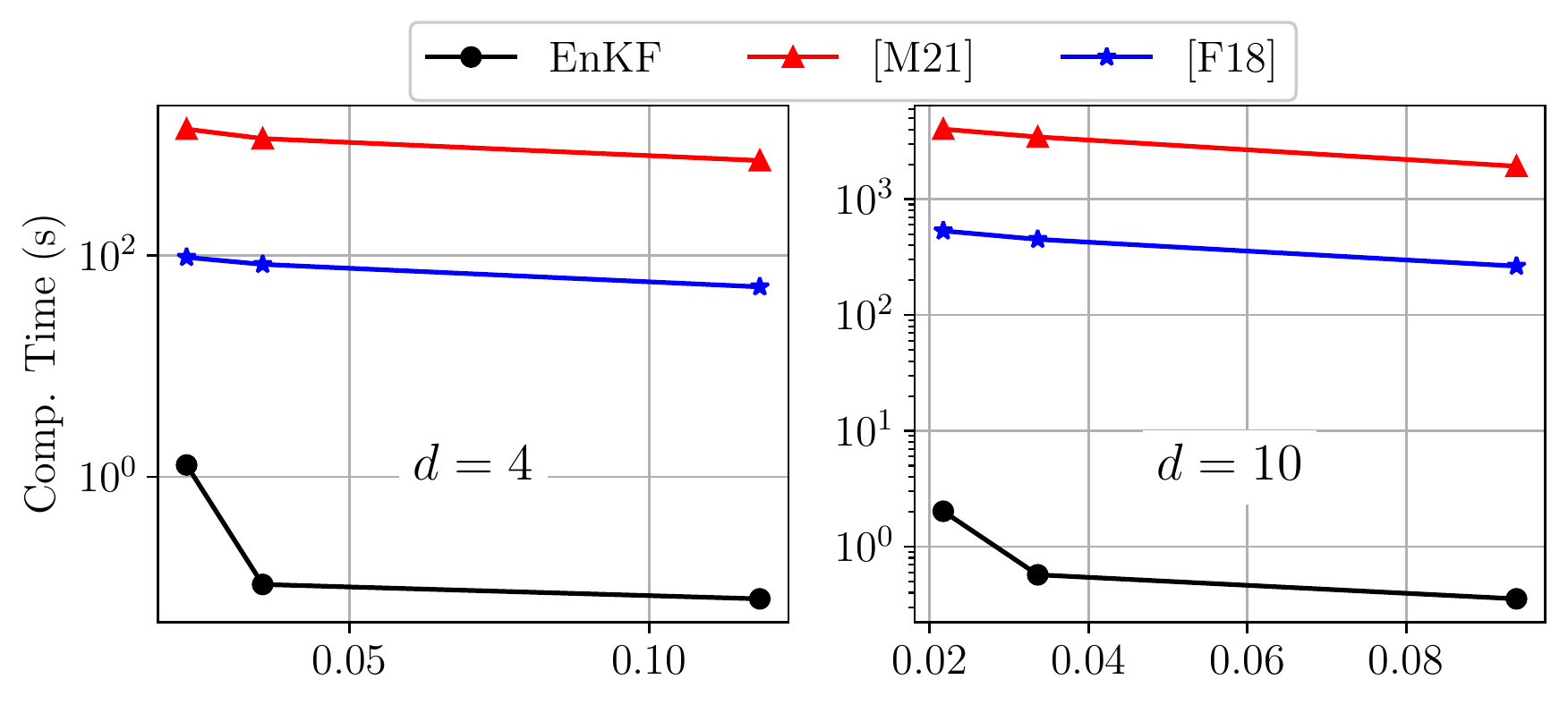}  
\vspace{-0.1in}
		\caption{Comparison of relative in error in gain}\label{fig:error-k}
	\end{figure}

\begin{figure}[h]
		\centering
\vspace*{-0.1in}
	\includegraphics[width=0.8\columnwidth]{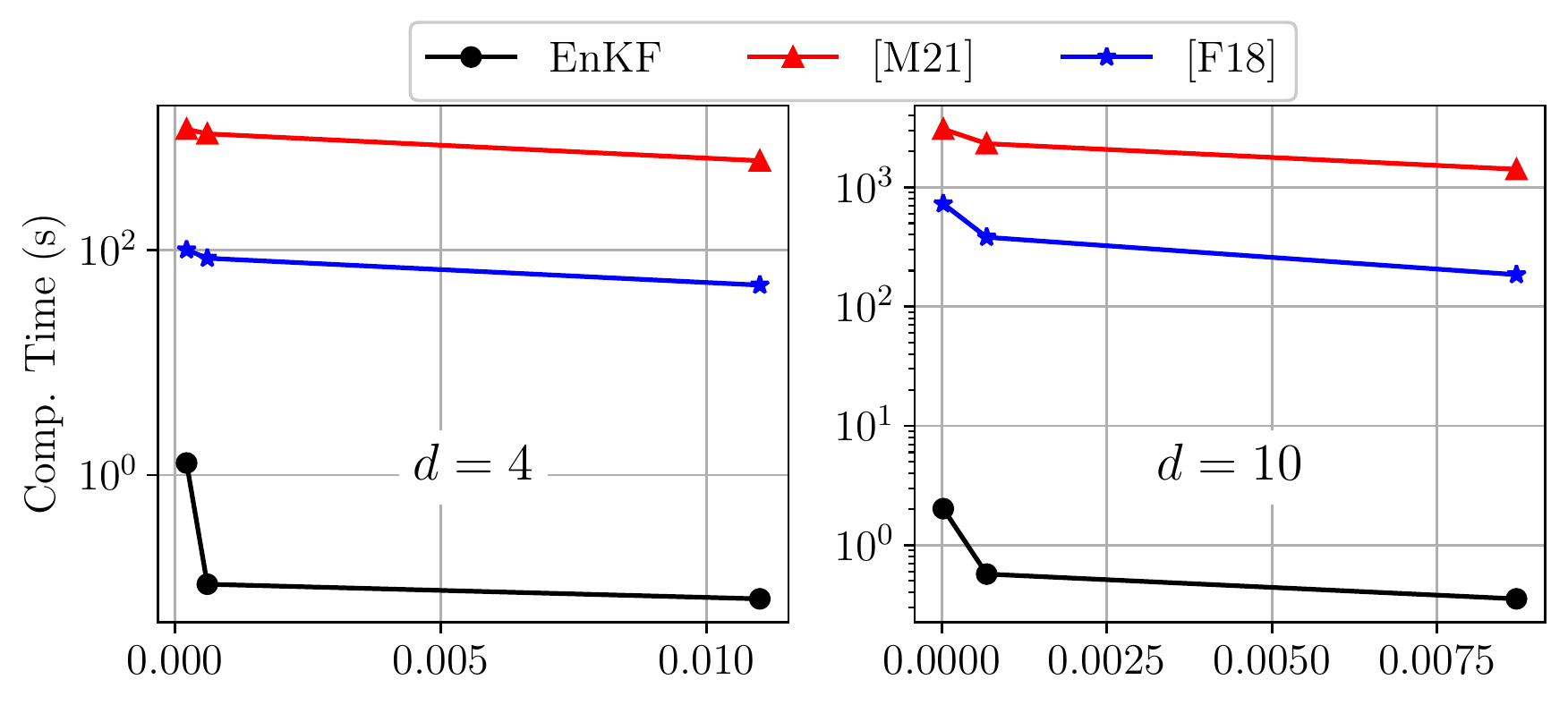}  
\vspace{-0.1in}
		\caption{Comparison of relative in error in cost}\label{fig:error-c}
\vspace{-0.2in}
	\end{figure}


\begingroup
\renewcommand*{\arraystretch}{1.4}
\begin{table}
\centering
\caption{Hyper-parameter values for policy gradient}
\label{table:params-gd-val}
\begin{tabular}{c|ccc|ccc} \hline
\multicolumn{1}{c}{Hyper-param.} & \multicolumn{3}{c}{[M21]} & \multicolumn{3}{c}{[F18]} \\ \hline
$d$ & 2 & $4$ & $10$ & $2$ & $4$ & $10$ \\
$r$ & $10^{-1}$ & $10^{-1}$ & $10^{-3}$ & $10^{-1}$ & $10^{-1}$ & $10^{-1}$ \\
$N_g$ & 2 & 4 & 10 & 2 & 4 & 10 \\ \hline
\end{tabular}
\end{table}
\endgroup

\section{Cart-pole system}
\label{app:cp-det}



The nonlinear model is taken from \cite[Chapter 3.2.1]{tedrake-notes}:
\begin{align*}
\dot{\theta} &= \omega \\
\dot{\omega} &= \frac{-F\cos(\theta) - ml\omega^2\cos(\theta)\sin(\theta) - (m+M)g\sin(\theta)}{l(M + m\sin^2(\theta))}  \\
\dot{x} &= v \\
\dot{v} &= \frac{F + m\sin(\theta)(l\omega^2 + g\cos(\theta))}{M + m\sin^2(\theta)} 
\end{align*}

For the specification of the LQ cost, we first linearize the system about the
desired inverted equilibrium $(\pi,0,0,0)$.  The associated $A$ and
$B$ matrices are as follows:
\begingroup
\renewcommand*{\arraystretch}{1.4}
\begin{align*}
A = \begin{bmatrix}
0 & 0 & 1 & 0 \\  
0 & 0 & 0 & 1 \\ 
\frac{(M+m)g}{Ml} & 0 & 0 & 0 \\
\frac{mg}{M} & 0 & 0 & 0
\end{bmatrix}, 
\quad 
B= \begin{bmatrix}
0 \\ \frac{1}{Ml} \\ 0 \\ \frac{1}{M} 
\end{bmatrix}
\end{align*}
Note these are used only to obtain the LQR solution (for comparison) but not needed to
implement the dual EnKF.  
The model parameters and the simulation parameters are are listed in Table \ref{table:ivp-param}.
\endgroup

%

\begingroup
\renewcommand*{\arraystretch}{1.4}
\begin{table}[h]
\centering
\caption{Parameters for the cart-pole system}
\begin{tabular}{c c c} \hline
Parameter name & Symbol & Numerical value \\ \hline
& \textbf{Model parameters} & \\
Mass of ball & $m$ & 0.08 \\ 
Mass of cart & $M$ & 1\\ 
Length of rod & $l$ & 0.7\\ 
Gravity & $g$ & 9.81\\  
Unstable equilibrium & $(\bar{\theta},\bar{x},\bar{\omega},\bar{v})$ & $(\pi,0,0,0)$\\ 
Initial condition & $(\theta(0),x(0),\omega(0),v(0))$ & $(1.25\pi,-0.1,0,0)$\\ \cline{1-3}
\multirow{3}{*}{LQ parameters}
 & $C$ & $\text{diag}([10,10,1,1])$\\ 
 & $R$ & 10\\ 
 & $P_T$ & $\id_4$\\ \hline
 & \textbf{Simulation parameters} & \\
 Simulation time & $T$ & 10 \\ 
Step size & $\stepsize$ & 0.0002 \\ \hline
\end{tabular}
\label{table:ivp-param}
\end{table}
\endgroup


  \bibliographystyle{elsarticle-num} 
  \bibliography{refs,literature}





\end{document}